\newtheoremstyle{localthm}
	{7pt} 
	{7pt} 
	{\sl} 
	{} 
	{\bf} 
	{{\rm.}} 
	{.7em} 
	{} 
\theoremstyle{localthm}
\newtheorem{Theorem}{Theorem}
\newtheorem{Lemma}[Theorem]{Lemma}
\newtheoremstyle{localrem}
	{5pt} 
	{5pt} 
	{\rm} 
	{} 
	{\bf} 
	{{\rm.}} 
	{.7em} 
	{} 
\theoremstyle{localrem}
\def\rss{\mathrm{rss}}
\def\RSS{\mathrm{RSS}}
\newcommand{\R}{\mathbb{R}}
\newcommand{\NN}{\mathcal{N}}
\def\rss{\mathrm{rss}}
\def\ss{\mathrm{ss}}
\def\RSS{\mathrm{RSS}}
\def\V{\mathbb{V}}
\def\bs{\boldsymbol}
\def\b{\bs{b}}
\def\bmu{\bs{\mu}}
\def\x{\bs{x}}
\def\X{\bs{X}}
\def\y{\bs{y}}
\def\Y{\bs{Y}}
\def\Z{\bs{Z}}
\def\hat{\widehat}
\def\tilde{\widetilde}
\begin{document}
\addtolength{\baselineskip}{0.4\baselineskip}
\title{Covariate Selection Based on a Model-free Approach to Linear Regression with Exact Probabilities 
}
\author{Laurie Davies and Lutz D\"umbgen\\
University of Duisburg-Essen, University of Bern}
\date{\today}
\maketitle

\begin{abstract}
In this paper we give a completely new approach to the problem of covariate selection in linear regression.  A covariate or a set of covariates is included only if it is  better in the sense of least squares than the same number of Gaussian covariates consisting of i.i.d. $N(0,1)$ random variables. The Gaussian P-value is defined as the probability that the Gaussian covariates are better. It is given in terms of the Beta distribution, it is exact and it holds for all data making it model-free free.  The covariate selection procedures require only a cut-off value $\alpha$ for the Gaussian P-value: the default value in this paper  is $\alpha=0.01$.  The resulting procedures are very simple, very fast, do not overfit and require only least squares. In particular there is no regularization parameter, no data splitting, no use of simulations, no shrinkage and no post selection inference is required. The paper includes the results of simulations, applications to real data sets and theorems on the asymptotic behaviour under the standard linear model. Here the step-wise procedure performs overwhelmingly better than any other procedure we are aware of. An R-package {\it gausscov} is available.
\end{abstract}
{\it Keywords}. Linear regression, covariate selection, Gaussian covariates, exact probabilities, model free




\section{Introduction} \label{sec:intro}

 Most statistical problems can be interpreted as ones of distinguishing a signal, here a relevant covariate, from noise. In this paper this is accomplished in a direct manner by comparing each covariate $\x_i$ with Gaussian i.i.d. noise generated by the statistician. The comparison is based on the Gaussian P-value $P_G(\x_i)$ which is defined as the probability that Gaussian noise is better than the covariate as measured by the reduction in the sum of squared residuals. 

More precisely consider a dependent variable $\y$ of size $n$ and $q$ covariates $\x_i, i=1,\ldots,q$. Regress $\y$ on a subset ${\mathcal S}$ of size $k<n-1$ with sum of squared residuals by $\rss_k$. Now include a Gaussian covariate $\Z_1$ consisting of $n$ i.i.d. $N(0,1)$ random variables and   regress $\y$ on $\mathcal{S} \cup \Z_1$ with sum of squared residuals $\RSS$. Then it follows from Theorem~\ref{the:RSS} below
\begin{equation} \label{equ:RSS_dist}
\RSS/\rss_k\sim \text{Beta}((n-k-1)/2,1/2)
\end{equation}
where $\text{Beta}(a,b)$ denotes a Beta random variable with parameters $(a,b)$. The whole of this paper derives from (\ref{equ:RSS_dist}) and a slightly more general result when $\y$ is regressed on $\mathcal{S}\cup\{\Z_1,\ldots,\Z_k\}$. Indeed in a sense (\ref{equ:RSS_dist}) {\it is} the paper. The result is rather surprising for two reasons. Firstly, it is model free as it depends only on $\rss_k$ and, secondly, the distribution can be stated exactly. In particular   (\ref{equ:RSS_dist}) remains valid no matter how the subset $\mathcal{S}$ was chosen.

We can use (\ref{equ:RSS_dist}) to define a P-value of a covariates $\x_i$ in the simplest situation where $\mathcal{S}$ is the set of all covariates and $q<n-1$ as follows.  Regress $\y$ on all $q$ covariates with sum of squared residuals $\rss_q$. Replace $\x_i$ by a Gaussian covariate $\Z_i$ and regress $\y$ on $(\mathcal{S}\setminus\{\x_i\})\cup\{\Z_i\}$ to give a sum of squared residuals $\RSS_i$. The Gaussian P-value of $\x_i$ is defined by
 \begin{equation} \label{equ:P_val_1}
P_G(\x_i)=\bs{P}(\RSS_i \le \rss_q)= \text{Beta}_{(n-q)/2,1/2}(\rss_q/\rss_{q,-i})
\end{equation}
where $\rss_{q,-i}$ is the sum of squared residuals based on all  $\mathcal{S}\setminus\{\x_i\}$ and $\text{Beta}_{a,b}$ denotes the distribution function of the Beta distribution with parameters $(a,b)$.  The P-value is the probability that $\Z_i$ is better than $\x_i$. It inherits the properties of (\ref{equ:RSS_dist}): it can  be calculated exactly without the need for simulations, data splitting or the determination of some regularization parameter, it is model  free and is valid no matter what the data.

It follows from Theorem~\ref{the:PF_PG} that $P_G(\x_i)=P_F(\x_i)$ where $P_F$ is the usual P-value based on the F-distribution. In spite of this equality the two P-values are entirely different. The randomness in the case of $P_G$ is inserted by the statistician who replaces $\x_i$ by a Gaussian covariate $\Z_i$. The randomness in the case of $P_F$ comes from the error term $\bs{\varepsilon}$ in the standard model
\begin{equation} \label{equ:mod1}
\y=\sum_{\x_j \in \mathcal{S}}^k\beta_j\x_j+\sigma\bs{\varepsilon}.
\end{equation}
with $\bs{\varepsilon}$ Gaussian noise. The Gaussian P-value $P_G(\x_i)$ is always valid, the F P-value $P_F(\x_i)$ is only valid under the model (\ref{equ:mod1}). The P-value $P_G(\x_i)$ can be calculated by simulation: simply replace $\x_i$ by $\Z_i$, run the simulations and calculate the relative frequency with which $\Z_i$ is better than $\x_i$. The P-value $P_F(\x_i)$ cannot be simulated as this would require knowledge of the true model (\ref{equ:mod1}) as well as the true values of the $\beta_j$ and $\sigma$. 

More generally given subset $\mathcal{S}$ of size $k<n$ and a covariate $\x_i \in \mathcal{S}$ its Gaussian P-value will be defined as
\begin{equation} \label{equ:P_val_2}\
P_G(\x_i)=\text{Beta}_{1,q-k+1}(\text{Beta}_{(n-k)/2,1/2}(\rss_k/\rss_{k,-i})).
\end{equation}
Given $\mathcal{S}$ of size $k<n-1$ and a covariate $\x_i \notin \mathcal{S}$ its Gaussian P-value will be defined as
\begin{equation} \label{equ:P_val_3}
 P_G(\x_i)=\text{Beta}_{1,q-k}(\text{Beta}_{(n-k-1)/2,1/2}(\rss_{k,+i}/\rss_k))
\end{equation}

Two selection procedures will be defined. The first is the all subset procedure which considers all $2^q$ subsets of the $q$ covariates and selects those subsets all of whose covariates have Gaussian P-values (\ref{equ:P_val_2}) not exceeding a specified threshold $\alpha$. The stepwise procedure is based on the Gaussian P-values  (\ref{equ:P_val_3}) where now $\mathcal{S}$ represents a the selected subset at a particular step in the procedure and a decision is to be made which if any covariates  $\x_i \notin \mathcal{S}$  are to be selected in the next step.  The Gaussian P-value (\ref{equ:P_val_3}) can be much larger than the corresponding standard F P-values. If $k=1$ and $q=176358$, an example considered below, then the Gaussian P-value 0.01 of (\ref{equ:P_val_3}) corresponds to a standard F P-value of  5.025168e-08. The Gaussian P-values derive from (\ref{equ:RSS_dist}) and inherits its properties: they are exact and valid for all data, all subset and all covariates. Gaussian white noise is the only noise for which all this holds.

An R package {\it gausscov} is available.

The remainder of this paper is organized as follows. In Section~\ref{sec:F-test.etc.2} we state Theorems~\ref{the:RSS} and \ref{the:PF_PG} from which follow (\ref{equ:RSS_dist}) and (\ref{equ:P_val_1}) and show that $P_G=P_F$ in more generality. In Section~\ref{sec:cov_selec} we define the two selection procedures, the all subsets and the step-wise procedures  and derive the selection Gaussian P-values  (\ref{equ:P_val_2})  and (\ref{equ:P_val_3}).  $\alpha$-approximation regions and intervals corresponding to $\alpha$-confidence regions and intervals are defined in Section~\ref{sec:post-select}. Section~\ref{sec:false_pos} considers the problem of false positives and false negatives. The problem of relevant groups rather than individual covariates is considered in Section~\ref{sec:rel_subset}. The dependency graphs and lagged covariates are discussed in the Sections~\ref{sec:graphs} and \ref{sec:lags}. Extensions to $M$-regression and non-linear regression and are described in Section~\ref{sec:extensions}.  Some asymptotic results on the behaviour of the step-wise procedure are given in Section~\ref{sec:boun_asymp}. Some simulation results and applications to real data sets are presented in Section~\ref{sec:examples} some of which are taken from \cite{DAV21} which gives a detailed comparison the the Gaussian covariate with 13 other covariate selection procedures.  Proofs of theoretical results and technical details are deferred to appendices.

\section{Exact probabilities for the model-free approach}
\label{sec:F-test.etc.2}
\subsection{Gaussian covariates}
 Consider a subset $\mathcal{S}$ of covariates of size $k$ and a subset $\mathcal{S}_0\subset \mathcal{S}$ of size $k_0<k$.  Regress the dependent variable $\y$ on the $\x_i\in\mathcal{S}_0$ to give sum of squared residuals of $\rss_0$. Now replace the covariates $\x_i \in \mathcal{S}\setminus \mathcal{S}_0$ by $k-k_0$ independent Gaussian covariates $\Z_i=N_n({\bf 0},{\bf I}),i=k_0+1,\ldots,k$ and regress $\y$ on the covariates $\x_i \in \mathcal{S}_0,\Z_{k_0+1},\ldots ,\Z_k$ with resulting in a sum $\RSS$ of squared  residuals. We have
 \begin{Theorem} \label{the:RSS}
\[ RSS/rss_0 \sim B((n-k)/2,(k-k_0)/2).\]
  \end{Theorem}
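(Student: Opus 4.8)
The plan is to reduce everything to a statement about projections of standard Gaussian vectors and then invoke the standard relationship between such projections and the Beta distribution. First I would set up notation: let $V_0 \subseteq \R^n$ be the column space of the design matrix formed by the covariates in $\mathcal{S}_0$, so that $\dim V_0 = k_0$ (assuming, as one may generically, that the relevant design matrices have full column rank). Let $P_0$ denote orthogonal projection onto $V_0$ and let $\y_0 = (\mathbf{I} - P_0)\y$ be the residual vector after regressing $\y$ on $\mathcal{S}_0$, so $\rss_0 = \|\y_0\|^2$. The key observation is that $\y_0$ is a \emph{fixed} vector lying in $V_0^\perp$, which has dimension $n - k_0$.

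The main conceptual step is to show that $\RSS = \|\y_0\|^2 - \|Q\y_0\|^2$, where $Q$ is the orthogonal projection onto the $(k-k_0)$-dimensional subspace of $V_0^\perp$ spanned by the components of $\Z_{k_0+1}, \ldots, \Z_k$ orthogonal to $V_0$. Concretely, regressing $\y$ on $\mathcal{S}_0 \cup \{\Z_{k_0+1}, \ldots, \Z_k\}$ gives residual sum of squares equal to the squared norm of the projection of $\y$ onto the orthogonal complement of the combined span; by the Pythagorean decomposition this equals $\rss_0$ minus the squared length of the projection of $\y_0$ onto the span of the $\Z_i$'s \emph{within} $V_0^\perp$. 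Now I would argue that $W := (\mathbf{I} - P_0)\Z_i$, $i = k_0+1, \ldots, k$, are i.i.d. $N_{n}(\mathbf{0}, \mathbf{I} - P_0)$, i.e. they live in the $(n-k_0)$-dimensional space $V_0^\perp$ and, viewed in an orthonormal basis of that space, are i.i.d. standard Gaussian in $\R^{n-k_0}$. Hence the random subspace $\spann(W_{k_0+1}, \ldots, W_k)$ is a uniformly distributed (Haar) random $(k-k_0)$-dimensional subspace of $V_0^\perp$, independent of the fixed direction of $\y_0$.

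The final step is the classical computation: if $\u$ is a fixed unit vector in $\R^m$ (here $m = n - k_0$ and $\u = \y_0/\|\y_0\|$) and $\Pi$ is the orthogonal projection onto a Haar-random $d$-dimensional subspace (here $d = k-k_0$), then $\|\Pi \u\|^2 \sim \text{Beta}(d/2, (m-d)/2)$, equivalently $1 - \|\Pi\u\|^2 \sim \text{Beta}((m-d)/2, d/2)$. Therefore
\[
\RSS/\rss_0 = 1 - \|Q\y_0\|^2/\|\y_0\|^2 \sim \text{Beta}\bigl((m-d)/2, d/2\bigr) = \text{Beta}\bigl((n-k)/2, (k-k_0)/2\bigr),
\]
which is the claim. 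I expect the main obstacle to be the careful justification of the rotational-invariance / Haar step: one must verify that $\spann\{(\mathbf{I}-P_0)\Z_i\}$ really is uniform in the Grassmannian of $V_0^\perp$ and independent of $\y_0$, and that the $\Z_i$'s are almost surely in general position so the dimension count is exact. This follows from the orthogonal invariance of the standard Gaussian measure — for any orthogonal map of $V_0^\perp$, the pushforward of the law of $((\mathbf{I}-P_0)\Z_i)_i$ is unchanged — but it should be stated cleanly, perhaps by fixing an orthonormal basis and reducing to the standard fact that $k-k_0$ i.i.d. $N(\mathbf{0},\mathbf{I}_m)$ vectors span a uniformly random subspace. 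The $\RSS = 0$ degenerate case when $k - k_0 \ge n - k_0$, i.e. $k \ge n$, is excluded by hypothesis, so the Beta parameters are strictly positive.
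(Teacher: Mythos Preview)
Your proposal is correct and takes a genuinely different route from the paper. The paper first treats the case $k-k_0=1$ by choosing an orthonormal basis of $\R^n$ whose $(k_0+1)$-st vector points along $\y_0$, so that $\RSS/\rss_0 = \sum_{j=k_0+2}^n Z_j^2 \big/ \sum_{j=k_0+1}^n Z_j^2 \sim \text{Beta}((n-k_0-1)/2,1/2)$; it then handles $k-k_0>1$ \emph{inductively}, writing $\RSS/\rss_0$ as a telescoping product of $k-k_0$ independent Beta variables and invoking a separate lemma (Lemma~\ref{lem:Beta.Beta}) on products of Beta distributions to collapse the product to $\text{Beta}((n-k)/2,(k-k_0)/2)$. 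Your argument bypasses both the induction and the Beta-product lemma by appealing directly to the classical fact that the squared length of the projection of a fixed unit vector onto a Haar-random $d$-dimensional subspace of $\R^m$ is $\text{Beta}(d/2,(m-d)/2)$. This is more conceptual and shorter; the paper's approach, on the other hand, is slightly more self-contained in that the one-step Beta identity is derived from scratch, and the inductive structure also yields the intermediate one-step result \eqref{equ:dist_RSS} used elsewhere in the paper.
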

Theorem~\ref{the:RSS} is model free and exact whatever the data, the subsets $\mathcal{S}_0\subset \mathcal{S}$ and the covariates $\x_i\in \mathcal{S}\setminus \mathcal{S}_0$.

 The model  free approach for the combined relevance of the covariates $\x_i \in \mathcal{S}\setminus \mathcal{S}_0$ is as follows.  Regress $\y$ on all covariates $\x_i \in \mathcal{S}$ with sum of squared residuals $\rss$. The Gaussian P-value is defined by
\begin{equation} \label{equ:def_Pval}
P_G=\bs{P}(\RSS\le \rss).
\end{equation} 
It the probability that the Gaussian covariates $\Z_{k_0+1},\ldots,\Z_k$ are better than the  $\x_i \in \mathcal{S}\setminus \mathcal{S}_0$.

We have
 \begin{Theorem} \label{the:PF_PG}
The P-value (\ref{equ:def_Pval})  satisfies
 \[P_G=\ \text{Beta}_{(n - k)/2,(k- k_0)/2}(\rss/\rss_0)=1 - \text{F}_{k-k_0,n-k} \Bigl( \frac{(\rss_0 - \rss)/(k- k_0)}{\rss/(n - k)} \Bigr)=P_F.\]
 \end{Theorem}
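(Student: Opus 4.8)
The plan is to split the displayed chain of equalities into three links and dispatch each in turn: (i) that $P_G = \text{Beta}_{(n-k)/2,(k-k_0)/2}(\rss/\rss_0)$, which is essentially a restatement of Theorem~\ref{the:RSS}; (ii) that $\text{Beta}_{(n-k)/2,(k-k_0)/2}(\rss/\rss_0)$ equals $1 - \text{F}_{k-k_0,n-k}$ evaluated at $((\rss_0-\rss)/(k-k_0))/(\rss/(n-k))$, which is the classical Beta--F identity together with an algebraic substitution; and (iii) that this last expression is, by definition of $P_F$, the usual F-test P-value under the model (\ref{equ:mod1}).

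For link (i) the essential point is to keep straight what is random. Given the data $\y,\x_1,\ldots,\x_q$, the sums of squared residuals $\rss$ (regression of $\y$ on all of $\mathcal{S}$) and $\rss_0$ (regression of $\y$ on $\mathcal{S}_0$) are deterministic numbers, whereas $\RSS$ (regression of $\y$ on $\mathcal{S}_0\cup\{\Z_{k_0+1},\ldots,\Z_k\}$) is random through the statistician's Gaussian covariates. Assuming $0<\rss<\rss_0$ (the cases $\rss=\rss_0$ and $\rss=0$ being trivial, with $P_G=1$ and $P_G=0$ respectively), Theorem~\ref{the:RSS} and division by the constant $\rss_0$ give
\[
 P_G=\bs{P}(\RSS\le\rss)=\bs{P}(\RSS/\rss_0\le\rss/\rss_0)=\text{Beta}_{(n-k)/2,(k-k_0)/2}(\rss/\rss_0).
\]

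For link (ii) I would first record the standard identity: for $a,b>0$ and $x\in(0,1)$,
\[
 \text{Beta}_{a,b}(x)=1-\text{F}_{2b,2a}\Bigl(\frac{a}{b}\cdot\frac{1-x}{x}\Bigr),
\]
which follows in one step from the $\chi^2$ representations: taking $U\sim\chi^2_{2a}$ and $V\sim\chi^2_{2b}$ independent, one has $U/(U+V)\sim B(a,b)$ and $(aV)/(bU)=(V/(2b))/(U/(2a))\sim\text{F}_{2b,2a}$, while the events $\{U/(U+V)\le x\}$, $\{V/U\ge(1-x)/x\}$ and $\{(aV)/(bU)\ge(a/b)(1-x)/x\}$ coincide; one may alternatively cite the relation between the incomplete Beta function and the F distribution function. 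Substituting $a=(n-k)/2$, $b=(k-k_0)/2$, $x=\rss/\rss_0$, so that $2a=n-k$, $2b=k-k_0$ and $(a/b)(1-x)/x=((\rss_0-\rss)/(k-k_0))/(\rss/(n-k))$, yields link (ii). Two things to watch here: the Beta parameters $(a,b)$ become the F degrees of freedom in the reversed order $(2b,2a)$, and the monotone decreasing behaviour of $r\mapsto((\rss_0-r)/(k-k_0))/(r/(n-k))$ on $(0,\rss_0)$ is what turns a lower-tail Beta probability into an upper-tail F probability.

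Link (iii) is standard least-squares theory and may simply be cited: under (\ref{equ:mod1}) with $\mathcal{S}_0$ the null design and $H_0:\beta_j=0$ for all $\x_j\in\mathcal{S}\setminus\mathcal{S}_0$, the statistic $((\rss_0-\rss)/(k-k_0))/(\rss/(n-k))$ has the $\text{F}_{k-k_0,n-k}$ distribution, so the (upper-tail) F P-value is precisely $1-\text{F}_{k-k_0,n-k}$ at its observed value. I do not anticipate a genuine difficulty in any of this; the only real care is the random-versus-deterministic bookkeeping in (i) and the correct ordering of the degrees of freedom and the tail orientation in (ii).
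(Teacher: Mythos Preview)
Your proposal is correct and follows essentially the same route as the paper: the first equality is read off directly from Theorem~\ref{the:RSS}, and the Beta--F identity is obtained via the common $\chi^2$ representation of both distributions, exactly as in the paper's appendix. Your organization into three explicit links and your attention to the degree-of-freedom ordering and tail orientation are if anything slightly more careful than the paper's own treatment.
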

\noindent
where $\rss_0$ denotes the sum of squared residuals for the regression based on all $\x_i \in \mathcal{S}_0$  The proof is given in the Appendix. The case $k_0=k-1$ follows from (\ref{equ:RSS_dist})  which is (\ref{equ:dist_RSS}). The general case with $k_0<k$ follows from (\ref{equ:RSS_dist_gen}).

\section{Selecting covariates}
\label{sec:cov_selec}

\subsection{All subsets}\label{sec:all_sets}

The $P_G$-value (\ref{equ:P_val_2}) is derived as follows. Given a subset $\mathcal{S}$ of size $k$ and a covariate $\x_i \in \mathcal{S}$ all the remaining covariates and $\x_i$ itself are replaced by $q-k+1$ i.i.d. Gaussian covariates $\Z_j, j=1,\ldots, q-k+1$. The sum of squared residuals based on $\mathcal{S}$ is denoted by $\rss_k$. The covariate $\x_i$ is replaced by each of the covariates $Z_j$ in turn to gives sums of squares residuals $\RSS_j, j=1,\ldots,q-k+1$. The best of the $\Z_j$ is better than $\x_i$ if $\min_{j=1,\ldots,q-k+1}\RSS_j \le \rss_k.$ Thus the Gaussian P-value of $\x_i$ is given by
\begin{equation}\label{equ:P_G_val_2}
P_G(\x_i)=\bs{P}(\min_{j=1,\ldots,q-k+1}\RSS_j \le \rss_k)=\text{Beta}_{1,q-k+1}(\text{Beta}_{(n-k)/2,1/2}(\rss_k/\rss_{k,-i}))
\end{equation}
which follows from Theorem~\ref{the:RSS}. This is (\ref{equ:P_val_2}).

The {\it gausscov} all subset function is {\it fasb}. It retains all subsets for which each covariate in the subset has a Gaussian P-value (\ref{equ:P_G_val_2}) at most $\alpha$. In a second step all subsets which are subsets of some other retained subset are discarded. The remaining subsets are maximal in the sense that it is not possible to include another covariate whilst still maintaining the upper bound $\alpha$ for all covariates in the subset. Finally the retained subsets are ordered by the sums of the squared residuals. 
 
\subsection{The Gaussian step-wise procedure} \label{sec:step_wise}

Suppose a subset $\mathcal{S}$ of $k$ covariates has already been selected with sum of squared residuals $rss_k$. There remain $q-k$ covariates. The candidate for selection is that  covariate $\x_b$ with the smallest sum of squared residuals $\rss_{k,+b}$ when $\y$ is regressed on $\mathcal{S}\cup\{\x_b\}$. Its Gaussian P-value is given by (\ref{equ:P_val_3})
\[P_G(\x_b)=\text{Beta}_{1,q-k}(\text{Beta}_{(n-k-1)/2,1/2}(\rss_{k,+b}/\rss_k))\]
by the same argument which lead to (\ref{equ:P_G_val_2}). If this is less than the cut-off value $\alpha$ then $\x_b$ is selected and the procedure continues. Otherwise the procedure terminates. The Gaussian P-values which determine whether a covariate is selected or not depend on the set of already selected covariates at this point. They can differ from the P-values calculated for the final set. If this set is not too large (specified by the user) the all subset procedure is applied to this set by default. If there is no subset all of whose Gaussian P-values are less than the cut-off value $\alpha$ the procedure terminates without specifying a subset. Otherwise that subset with the smallest  sum of squared residuals is returned. The {\it gausscov} function is {\it f1st}.

Instead of considering just one covariate for selection the first $kmn$ can be selected for a specified number  $kmn$ irrespective of the P-values.After this set has been selected the selection procedure continues until the candidate covariate has a P-value exceeding $\alpha$ when it terminates. If $kmn$ is not too large for example $kmn=20$ then all subsets of these $kmn$ covariates can be considered as in Section~\ref{sec:all_sets}. 

No step-wise procedure is guaranteed to work but Theorems \ref{thm:consistency.0}, \ref{thm:consistency.general} and \ref{thm:consistency.orthogonal} in Section~\ref{sec:boun_asymp} give sufficient condition when considering data generated under the standard linear model with a known correct set of covariates.  For large $n$ the probability of not selecting the correct subset is bounded above by $\alpha$. This supports the interpretation of $\alpha$ as an upper bound for the probability of selecting a false positive.

\subsection{Repeated Gaussian procedures} 
\label{sec:rep_step_wise}
 A selected subset $\mathcal{S}$ of covariates represents a linear approximation to the dependent variable $\y$. There will in general be more than one such approximation. Further ones may be obtained by excluding the subset selected by {\it f1st} and then applying {\it f1st} to those remaining. This is continued until no more covariates are selected by  {\it f1st}.  The {\it gausscov} function is {\it f2st}.

 A second method which is less radical than {\it f2st} is as follows. Again {\it f1st} is used to select an initial subset but now, in the second step,  instead of excluding all covariates initially selected they are excluded one at a time whilst retaining the others. {\it f1st} and then applied to those remaining. This can be iterated $m$ times where $m$ is specified by the user. The {\it gausscov} function is {\it f3st} 

\subsection{Constructing models} \label{sec:const_mod}
The Gaussian covariate selection procedures produce linear approximations and not models. However given such an approximation it is possible to construct a model making use only of the selected covariates. This is done for the riboflavin (Gaussian errors), leukemia (logit model) and sunspot (non-parametric regression with autoregressive Gaussian errors) data sets in \cite{DAV21}.

\section{$\alpha$-approximation regions and intervals} \label{sec:post-select}
The Gaussian procedure yields approximations to the dependent variable $\y$ with valid $P_G$ -values. A small modification of the $\bs{\beta}$ values will also result in an approximation although a somewhat worse one in the sense of least squares than the least squares approximation. We now consider the problem of deciding which $\bs{\beta}$ values  can be considered to gives an acceptable approximation. Given the $\bs{\beta}$ and a subset $\mathcal{S}$ of size $k$ we regress $\y-\x\bs{\beta}$ on $k$ Gaussian covariates to give a sum of squared residuals $\RSS$ and require this to be less than the least squares sum of squared residuals $\Vert y-\x\bs{\beta}\Vert^2$ . The probability that this is the case is
\begin{eqnarray}
\bs{P}(\RSS\le \Vert y-\x\bs{\beta}_{\text{ls}}\Vert^2)&=&\bs{P}(\RSS/\Vert y-\x\bs{\beta}\Vert^2\le \Vert y-\x\bs{\beta}_{\text{ls}}\Vert^2/\Vert y-\x\bs{\beta}\Vert^2)\nonumber\\
&=&\text{Beta}_{(n-k)/2,1/2}( \Vert y-\x\bs{\beta}_{\text{ls}}\Vert^2/\Vert y-\x\bs{\beta}\Vert^2)
\end{eqnarray}
from (\ref{equ:RSS_dist}) where $\bs{\beta}_{\text{ls}}$ denotes the least squares values of the $\bs{\beta}$. If we specify the probability $\alpha$ with which this is required to hold it follows after some manipulation that
\begin{equation} \label{equ:conf_reg_2}
\Vert \y-\x\bs{\beta}\Vert^2 \le \Vert \y-\x\bs{\beta}_{\text{ls}}\Vert^2/\text{Beta}^{-1}(\alpha,(n-k)/2,k/2)\}
\end{equation}
leading to the $\alpha$-approximation region
\begin{equation} \label{equ:conf_reg_2}
C(\alpha)=\{\bs{\beta}:\Vert \y-\x\bs{\beta}\Vert^2 \le \Vert \y-\x\bs{\beta}_{\text{ls}}\Vert^2/\text{Beta}^{-1}(\alpha,(n-k)/2,k/2)\}.
\end{equation}
This is the same as the standard $\alpha$-confidence regions but in contrast to the latter it is model-free and always valid.

This can be done for intervals as follows. Take the $k$th covariate $ \x_k$ with least squares coefficient $\beta_{k;\text{ls}}$. Regress $\y-(\beta_{k;\text{ls}}+\lambda)\x_k$ on the remaining $k-1$ covariates. Then the sum of the squared residuals is 
\[\Vert \y-\x\bs{\beta}_{\text{ls}}\Vert^2+\lambda^2\Vert \x_k-\text{Proj}_{k-1}(\x_k)\Vert^2=\Vert \y-\x\bs{\beta}_{\text{ls}}\Vert^2+\lambda^2\sigma_k^2\]
where  $\text{Proj}_{k-1}$ denotes the projection onto the subspace spanned by the remaining $k-1$ covariates and $\sigma^2_k=(\bs{x}^t\bs{x})^{-1}_{k,k}$. This is the increase in the sum of squared residuals using a value of $\beta_k=\beta_{k;\text{ls}}+\lambda$ which differs from the least squares value $\beta_{k;\text{ls}}$. Regress  $\y-(\beta_{k;\text{ls}}+\lambda)\x_k$ on the remaining $k-1$ covariates and a Gaussian covariate $\Z_k$ to give a sum of squared residuals $\RSS_k$. From  (\ref{equ:RSS_dist}) we have for a given $\alpha$ 
\begin{equation}    \label{equ:alpha_free_2}
\bs{P}(\RSS_k\le\text{Beta}^{-1}(\alpha,(n-k)/2,1/2)(\Vert \y-\x\bs{\beta}{_\text{ls}}\Vert^2+\lambda^2\sigma_k^2))=\alpha
\end{equation}
so that $\bs{P}(\RSS_k\le\Vert \y-\x\bs{\beta}_{\text{ls}}\Vert^2)\ge \alpha$ if
\[\lambda^2\le\frac{\Vert\y-\x\bs{\beta}_{\text{ls}}\Vert^2}{\sigma_k^2}\left(\frac{1}{\text{Beta}^{-1}(\alpha,(n-k)/2,1/2)}-1\right)\]
which corresponds to the standard $1-\alpha$ confidence interval based on the t-distribution.

\section{False positives and false negatives} \label{sec:false_pos}
False positives and false negatives are usually defined in terms of hypotheses about parameter values in a linear regression. A false positive is the rejection of the hypothesis $H_j:\beta_j=0$ although it is true, a false negative is the acceptance of $H_j$ although it is false. In simulations these definitions can be used and can be of interest. For real data matters are more complicated and the decisions can only be made on knowledge of the data. 

A false positive would be a covariate which is included in the selection but has no relevance for the dependent variable $\y$, a case of a spurious correlations, for example where $\y$ and the covariate $\x$ increase over time. If care has been taken with the data so that no clearly irrelevant covariates have been included then the Gaussian covariate procedure will avoid false positive. Any selected covariate has a Gaussian P-value of less than $\alpha$ which means that it is significantly better than i.i.d. Gaussian covariates .This means that the Gaussian covariate procedure does  not overfit, a property confirmed in practice (see the simulations and examples in \cite{DAV21})).

False negatives are more difficult. A false negative is a relevant covariate which is relevant but is not selected.  This can happen in multiple ways, a non-linearity in the relationship between the dependent variable and one or more covariates, the step-wise procedure failing because the first Gaussian P-value exceeds the cut-off value, an inability to consider all subsets when $q$ is large. This latter problem can be mitigated to some extent as described in Sections~\ref{sec:step_wise} and \ref{sec:rep_step_wise}. A claim about false negatives is more difficult to make than one about false positives as it involves a statement about a relevant covariate existing although its existence cannot be established.

\section{Relevant groups}\label{sec:rel_subset}
It can happen that a group of covariates is relevant although the the effect of the individual covariates is not sufficiently strong for this to be detected. The group lasso was proposed in \cite{YUANLIN06} to try and identify such groups ( see also Section~4 of \cite{DEBUEMEME15}).  We consider here the case that the Gaussian P-values exceed the cut-off value $\alpha$ but the P-value of the $R^2$ statistic is small in a sense to be made clear indicating that the covariates taken as a whole do have a  relevant effect. So far we have only come across this problem in the simulations  in Sections~\ref{sec:sim1} and \ref{sec:rangrph}. The reason seems to be that in these simulations all the covariates are Gaussian and all the $\beta_i$ are the same. 

As an example we take the simulations discussed in Section~\ref{sec:sim1}. The parameters are $(n,q)=(1000,1000)$ and 60 of the covariates have a non-zero coefficient value, namely $\beta=4.5/\sqrt{1000}$. We use the step-wise Gaussian method to choose 60 covariates. In one such simulation default version of the Gaussian method 54 of these had non-zero coefficients but the P-values of only nine covariates were below the cut-off values of which eight had a non-zero coefficient.   The sum of the squared residuals was 888.65 based on all 60. We now regress the dependent variable $\Y_{1000}$ on 1000 covariates generated generated in the same manner but with all coefficients zero. Of these the first 60 were chosen using using the Gaussian step-wise procedure with $kmn=60$ as in Section~\ref{sec:step_wise} and the dependent variable regressed on these 60. Over 500 such simulations the smallest sum of squared residuals was  1099 giving a P-value so to speak of 0.  Repeating this with $\beta=1/\sqrt{1000}$  gave a P-value of 0.2 indicating that this value of $\beta$ is  about the limit of detectability.

We propose the following. The default step-wise method compares the best of the remaining covariates with the best of the same number of i.i.d. $N_n(\boldsymbol{0},\boldsymbol{I})$ which is the first order statistic. We weaken this by comparing the best of the remaining covariates with the $\nu$th best of the random Gaussian covariates. If a subset of size $k$ has already been selected the Gaussian P-value of the next best covariate $\x_b$ is defined as 
\begin{equation} \label{equ:step_P_nu}
P_G(\x_b)=B_{\nu,q-k+2-\nu}(B_{(n-k-1)/2,1/2}(\rss_{k,+b}/\rss_k))
\end{equation}
where we use the same notation as for (\ref{equ:P_val_2}). Again, this probability is exact. One could instead just specify another cut-off probability instead of the default value $\alpha=0.01$ but is not easily interpretable which is why we prefer specifying $\nu$.

The larger $\nu$ the more likely it is that false positives will be selected. To estimate the number of false positives we regress $\y\in\R^n$, any $\y\ne 0$ as it is model-free, on $q$ i.i.d. $N_n(\boldsymbol{0},\boldsymbol{I})$ Gaussian covariates for a given $\nu$. Any selected covariate is a false positive. The package {\it gausscov} contains a function {\it fnfp} which gives values for $50\le n\le 5000,\,25\le q\le 50000,\,1\le nu\le 10, \,\alpha\in \{0.01,0.05\}$ by interpolating the results obtained from previous simulations. Other values can be simulated. As an example we put $(n,q,\alpha,\nu)=(1000,1000,0.01,\text{c}(5,10))$ which is used in Section~\ref{sec:sim1}. The means obtained from interpolating previous are 1.345 and 4.615. If simulations are used the means and standard deviations and a histogram are returned. The results of of 5000 simulations using  {\it fnfp} are given in Table~\ref{tab1}. The means for $\nu=5$ and $\nu=10$ are  1.295   and 4.571 and the standard deviations 1.19 and 2.41  respectively.
{\footnotesize
\begin{table}[h] 
\begin{center}
\begin{tabular}{ccccccccccccccc}
$\nu$&0&1&2&3&4&5&6&7&8&9&10&$\ge$ 11\\
1&0.99& 0.01&0.00&0.00&0.00&0.00&0.00&0.00&0.00&0.00&0.00&0.00\\
5&0.29& 0.34& 0.22&0.15&0.10&0.03&0.01&0.01&0.00&0.00&0.00&0.00\\
10&0.02&0.06&0.12&0.16&0.17&0.15&0.11&0.09&0.05&0.03&0.02&0.02\\
\end{tabular}
\caption{Histogram of false positives
  $(n,q,\alpha)=(1000,1000,0.01)$  with $\nu=1,\,5$ and $10$ based on 5000 simulations
  using {\it fnfp}. \label{tab1}}
\end{center}
\end{table}
}

Thus increasing $\nu$ from 1 to 10 will on average lead to about  about 4.6 false positives. If the increase in the number of covariates selected is much greater than this it may be deemed reasonable to use $\nu=10$. Examples of this are given in the simulations in Sections~\ref{sec:sim1} and \ref{sec:rangrph} .

\section{Graphs and lagged covariates}
\label{sec:graphs_lag}
One major advantage of the model-free nature of the covariate selection procedures is that they can be applied without change to situations which are modelled  in very different ways. We give two examples, the construction of graphs and the use of lagged covariates.

\subsection{Graphs}
\label{sec:graphs}
Given the model
\begin{equation} \label{equ:graph_mod}
\X=(\X_1,\ldots,\X_k)\sim \mathcal{N}[\bs{\mu},\bs{\Sigma})
\end{equation}
with $k<n$ the graphical independence structure of the distribution can be obtained from the location of zeros in the inverse matrix $\bs{\Sigma}^{-1}$ (\cite{WHITT90}). The structure can also be obtained by regressing each $\X_i$ on the remaining $\X_j$. This approach can be extended to the case $k>n$ using covariate selection methods as is shown in \cite{MEIBUE06}.

A graph can be constructed as follows. Each covariate $\x_i$ is regressed on the remaining covariates using the step-wise Gaussian covariate method. The covariate $\x_i$ is then joined to the selected covariates $\x_{\ell},\ell \in S_i$ to by edges with arrows pointing from the $\x_j$ to $\x_i$ to denote that the $\x_i$  depends on the $\x_j$. This gives a directed graph. An undirected graph is composed of edges without arrows to denote that the covariates are related.  In the default version the cut-off values $\alpha$ is set to $\alpha/q$ where $q$ is the number of covariates. The {\it f2st} or {\it f3st} of Section~\ref{sec:rep_step_wise} can also be used and typically give much larger graphs. 

\subsection{Lagged covariates}
\label{sec:lags}
 Modelling and analysing a data set using models based on lagged data is not simple involving as it does the determination of the coefficients and the order of the lags involved. Furthermore it seems only to be possible to do such an analysis if the order is small. The Gaussian step-wise selection procedure avoids these problems and can used to analyse vector lagged covariates. This is done for some American Business Cycle data in Section~\ref{sec:lag_cov}.

\section{Beyond least squares}
\label{sec:extensions}

We briefly consider extension to robust ($M$-)regression and non-linear regression. 

\subsection{$M$-regression} \label{sec:rob}
Let $\rho$ by a symmetric, positive and twice differentiable convex function with $\rho(0)=0$.  The default function will be the Huber's $\rho$-function with a tuning constant $c$ (\cite{HUBRON09}, page 69) defined by
\begin{equation}
\rho_{c}(u)=\left\{\begin{array}{ll}
\frac{u^2}{2}, &\vert u\vert \le c,\\
c\vert u\vert -\frac{c^2}{2},&\vert u \vert > c.\\
\end{array}
\right.
\end{equation}
The default value of $c$ will be $c=1$.

For a given subset ${\mathcal S}$ of covariates of size $k$ the sum of squared residuals is replaced by
\begin{equation}  \label{equ:min_rho}
s_0(\rho,\sigma)=\min_{\bs{\beta}({\mathcal S})}\,\frac{1}{n}\sum_{i=1}^n 
\rho\left(\frac{y_i-\sum_{j \in{\mathcal S}}x_{ij}\beta_j({\mathcal
      S})}{\sigma}\right).
\end{equation}
which can be calculated using the algorithm described in {\bf 7.8.2} of \cite{HUBRON09}.  The minimizing $\beta_j({\mathcal S})$ will be denoted by $\beta_j({\mathcal S},\text{lr})$.

For some $\x_\nu\notin {\mathcal S}$  put  
\begin{equation}  \label{equ:min_rho_j}
s_{\nu}(\rho,\sigma)=\min_{\bs{\beta}({\mathcal S}\cup \{\x_\nu\})}\,\frac{1}{n}\sum_{j=1}^n 
\rho\left(\frac{y_j-\sum_{j \in{\mathcal S}\cup\{\x_\nu\}}x_{ij}\beta_j({\mathcal
      S}\cup\{\x_\nu\})}{\sigma}\right).
\end{equation}

 Replace all the covariates not in ${\mathcal S}$ by standard  Gaussian white noise, include the $\ell$th such random covariate  denoted by $Z_{\ell}$ and put
\begin{equation} 
S_{\ell}(\rho,\sigma)=\min_{\bs{\beta}({\mathcal
    S}),b}\,\frac{1}{n}\sum_{i=1}^n   
\rho\left(\frac{y_i-\sum_{j \in{\mathcal S}}x_{ij}\beta_j({\mathcal 
      S})-bZ_{\ell}}{\sigma}\right).
\end{equation}
A Taylor expansion gives
\begin{eqnarray}
S_{\ell}(\rho,\sigma)&\approx&\frac{1}{2}\frac{\left(\sum_{i=1}^n\rho^{(1)}\left(\frac{r_i}{\sigma}\right)Z_i\right)^2}{\sum_{i=1}^n\rho^{(2)}\left(\frac{r_i}{\sigma}\right)Z_i^2}\nonumber\\
&\approx&s_0(\rho,\sigma)-\frac{1}{2}\frac{\left(\sum_{i=1}^n\rho^{(1)}\left(\frac{r_i}{\sigma}\right)\right)^2}{\sum_{i=1}^n\rho^{(2)}\left(\frac{r_i}{\sigma}\right)}\chi^2_1
\end{eqnarray}
with $r_i=y_i-\sum_{j \in{\mathcal S}}x_{ij}\beta_j({\mathcal S},\text{lr})$.
This leads to the asymptotic $P$-value  for ${\bf x}_{\nu}$
\begin{equation} \label{equ:pval_m}
1-\text{Chisq}\left(
  \frac{2s_0(\rho^{(2)},\sigma)}{s_0(\rho^{(1)},\sigma)}
  (s_0(\rho,\sigma)-s_{\nu}(\rho,\sigma))\right)^{q-k}.   
\end{equation}
corresponding to the exact Gaussian $P$-value (\ref{equ:P_val_3}) for the step-wise procedure. The P-value corresponding to the exact Gaussian P-value  (\ref{equ:P_val_2}) for the all subsets procedure is obtained by replacing $k$ by $k-1$.  Here 
\[ s_0(\rho^{(1)},\sigma)= \frac{1}{n}\sum_{i=1}^n 
\rho^{(1)}\left(\frac{r_i}{\sigma}\right)^2 , \quad
s_0(\rho^{(2)},\sigma)= \sum_{i=1}^n  
\rho^{(2)}\left(\frac{r_i}{\sigma}\right).\]

It remains to specify the choice of scale $\sigma$. The initial value of $\sigma$  is the median absolute deviation of $\bs{y}$ multiplied by the Fisher consistency factor 1.4826.  After the next covariate has been  included the new scale $\sigma_1$ is taken to be
\begin{equation} \label{equ:sig_rob_reg}
\sigma_1^2=\frac{1}{(n-k-1)c_f}\sum_{i=1}^n \rho^{(1)}(r_1(i)/\sigma_0)^2
\end{equation}
where the $r_1(i)$ are the residuals based on the $k+1$ covariates and $c_f$ is the Fisher consistency factor given by
\[c_f=\bs{E}(\rho^{(1)}(Z)^2) \]
where $Z$ is ${\mathcal N}(0,1)$ (see \cite{HUBRON09}). Other choices are possible.

\subsection{Non-linear approximation} \label{sec:non_lin}
For a given subset ${\mathcal S}$ of covariates of size $k$ the dependent variable $\bs{y}$ is now approximated by $g(\x({\mathcal S})\bs{\beta}({\mathcal S}))$ where $g$ is a smooth function. Write 
\begin{equation}
ss_0= \min_{\bs{\beta}({\mathcal S})}\,\frac{1}{n}\sum_{i=1}^n (y_i-g(\x_i({\mathcal S})^{\top}\bs{\beta}({\mathcal S})))^2.
\end{equation}
and denote the minimizing $\bs{\beta}({\mathcal S})$ by $\bs{\beta}({\mathcal S},\text{ls})$. Now include one additional covariate $\x_{\nu}$ with $\bs{x}_{\nu} \notin {\mathcal  S}$ and denote the mean sum of squared residuals by $ss_{\nu}$. As before all covariates not in ${\mathcal S}$ are replaced by standard Gaussian white noise. Include the $\ell$th random covariate denoted by $Z_{\ell}$ and put
\[SS_{\ell}=\min_{\beta({\mathcal S}),b}\frac{1}{n}\sum_{i=1}^n (y_i-g(\x_i({\mathcal S})^{\top}\bs{\beta}({\mathcal S})+bZ_{\ell}))^2.
\]

Arguing as above for robust regression results in 
\begin{equation} \label{equ:lsq_non_lin_1}
SS_1\approx ss_0- \frac{\sum_{i=1}^n r_i({\mathcal
     S})^2g^{(1)}(\x_i({\mathcal S})^{\top}\bs{\beta}({\mathcal S},\text{ls}))^2}{\sum_{i=1}^ng^{(1)}(\x_i({\mathcal S})^{\top}\bs{\beta}({\mathcal S},\text{ls}))^2}\chi^2_1
\end{equation}
where 
\begin{equation}\label{equ:lsq_non_lin_2}
r_i({\mathcal S})=y_i-g(\x_i({\mathcal S})^{\top}\tilde{\bs{\beta}}({\mathcal S},\text{ls})).
\end{equation}
The asymptotic $P$-value for the covariate ${\bf x}_{\nu}$ corresponding to the asymptotic $P$-value (\ref{equ:pval_m}) for $M$-regression is  
\begin{equation} \label{equ:pval_non_lin}
1-\text{Chisq}\left(\frac{(ss_0-ss_{\nu})\sum_{i=1}^ng^{(1)} (\x_i({\mathcal S})^{\top}\bs{\beta}({\mathcal
      S},\text{ls}))^2} {\sum_{i=1}^n r_i({\mathcal
     S})^2g^{(1)}(\x_i({\mathcal S})^{\top}\bs{\beta}({\mathcal S},\text{ls}))^2},1\right)^{q-k}.
\end{equation}

In the case of logistic regression with $g(u)=\exp(u)/(1+\exp(u))$ we have
\begin{equation}\label{equ:lsq_non_lin_2_logistic}
\frac{\sum_{i=1}^n r_i({\mathcal
     S})^2g^{(1)}(\x_i({\mathcal S})^{\top}\bs{\beta}({\mathcal
     S},\text{ls}))^2}{\sum_{i=1}^ng^{(1)}(\x_i({\mathcal
     S})^{\top}\bs{\beta}({\mathcal S},\text{ls}))^2}
 = \frac{\sum_{i=1}^n(y_i-p_i(0))^2p_i(0)^2(1-p_i(0))^2}
 {\sum_{i=1}^np_i(0)^2(1-p_i(0))^2}  
\end{equation}
where 
\[p_i(0)=\frac{\exp(\x_i({\mathcal S})^{\top}\bs{\beta}({\mathcal
    S},\text{ls}))}{1+\exp(\x_i({\mathcal S})^{\top}\bs{\beta}({\mathcal
    S},\text{ls}))}.\]
This corrects a mistake in Chapter 11.6.1.2 of \cite{DAV14} where 
\[\frac{\sum_{i=1}^np_i^3(1-p_i)^3} {\sum_{i=1}^np_i^2(1-p_i)^2}\]
occurs repeatedly instead of
\[\frac{\sum_{i=1}^n(y_i-p_i)^2p_i^2(1-p_i)^2}
 {\sum_{i=1}^np_i^2(1-p_i)^2}.\]

\section{Bounds and asymptotics}
\label{sec:boun_asymp}
 We provide some theoretical results about the step-wise choice of covariates in the model-based framework, in Tukey's sense a `challenge'. Throughout this section we assume that
 \[
 	\y \ = \ \bmu + \sigma \Z
 \]
 with unknown parameters $\bmu \in \R^n$, $\sigma > 0$ and random noise $\Z \sim N_n(\bs{0},\bs{I})$. Moreover, we assume without loss of generality that $\|\x_i\|=1, i=1,\ldots,q$. The set of chosen covariates is denoted by $\hat{{\mathcal S}}$ .

We consider firstly the case of no signal, $\bmu = \bs{0}$. In this situation the correct decision is $\hat{{\mathcal S}}=\emptyset$. 
\begin{Theorem}
\label{thm:consistency.0}
If $\bmu = \bs{0}$ then 
\[
	\bs{P}(\hat{{\mathcal S}} \ne \emptyset) \ \le \ - \log(1 - \alpha) .
\]
Furthermore if $q \to \infty$ and $n/\log(q)^2 \to \infty$ then for fixed $\alpha \in (0,1)$,
\[
	\bs{P}(\hat{{\mathcal S}} \ne \emptyset)
	\ \le \ \alpha + o(1)
\]
as  uniformly in $(\x_i), i=1,\ldots,q.$. In the special case of orthonormal regressors $\x_i$,
\[
	\bs{P}(\hat{{\mathcal S}} \ne \emptyset)
	\ \to \ \alpha
\]
$q \to \infty$.
\end{Theorem}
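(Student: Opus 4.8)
The plan is to reduce $\{\hat{\mathcal S}\ne\emptyset\}$ to an exact geometric event for a uniform random point on the sphere, and then to show that the relevant count is asymptotically Poisson with mean $\lambda:=-\log(1-\alpha)$, so that the probability of at least one point in the tail tends to $1-e^{-\lambda}=\alpha$. First, since $\bmu=\bs 0$ we have $\y=\sigma\Z$, so $U:=\y/\|\y\|$ is uniformly distributed on $\S^{n-1}$ and $\sigma$ plays no role. The stepwise procedure starts from $\mathcal S=\emptyset$, and since $\|\x_i\|=1$ the regression of $\y$ on the single covariate $\x_i$ has residual sum of squares $\rss_{0,+i}=\|\y\|^2(1-V_i)$ with $V_i:=\langle U,\x_i\rangle^2$; the candidate $\x_b$ is the maximiser of $V_i$. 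Applying \eqref{equ:P_val_3} with $k=0$ together with the identities $\text{Beta}_{1,q}(w)=1-(1-w)^q$ and the monotonicity of the Beta distribution function, one sees that the candidate is selected --- equivalently, $\hat{\mathcal S}\ne\emptyset$ --- exactly when $\max_{i\le q}V_i\ge t_{n,q}$, where $t_{n,q}:=1-\text{Beta}^{-1}(\alpha_q;(n-1)/2,1/2)$ and $\alpha_q:=1-(1-\alpha)^{1/q}$. Because $V_i\sim\text{Beta}(1/2,(n-1)/2)$, this choice gives the clean normalisation $\bs{P}(V_i\ge t_{n,q})=\alpha_q$ for every $i$. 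Writing $N_q:=\#\{i\le q:V_i\ge t_{n,q}\}$ we thus obtain the exact identity $\bs{P}(\hat{\mathcal S}\ne\emptyset)=\bs{P}(N_q\ge1)$, and $\bs{E}\,N_q=q\alpha_q\to\lambda$ since $q\alpha_q=-\log(1-\alpha)+O(q^{-1})$.

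Next I would exploit orthonormality to compute joint tail probabilities. Taking $\x_i=e_i$, the vector $(V_1,\dots,V_n)$ is $\text{Dirichlet}(1/2,\dots,1/2)$, so for fixed $r$ and any $r$ distinct indices $(V_{i_1},\dots,V_{i_r})$ has the symmetric density on the simplex proportional to $\prod_j v_j^{-1/2}\,(1-\sum_j v_j)^{(n-r)/2-1}$, and hence $P_r:=\bs{P}(V_{i_1}\ge t_{n,q},\dots,V_{i_r}\ge t_{n,q})$ depends only on $r$. The crucial estimate is $P_r=\alpha_q^{\,r}(1+o(1))$ for each fixed $r$, i.e.\ asymptotic independence of the coordinatewise tail events. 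I would prove it by a Laplace-type analysis of the incomplete Dirichlet integral: because $\max_iV_i$ is a rare event, the integral concentrates on $\{v_j\ge t_{n,q},\ \sum_j v_j=O(t_{n,q})\}$ with exponentially negligible mass outside, and on that region $(1-\sum_j v_j)^{(n-r)/2-1}=\exp\{-\tfrac{n}{2}\sum_j v_j+O(n(\sum_j v_j)^2)+O(1)\}$, the quadratic correction being $o(1)$ precisely because $n\,t_{n,q}^2\to0$. This last fact comes for free here, since $t_{n,q}=O(\log q/n)$ while orthonormality forces $n\ge q$, whence $n\,t_{n,q}^2=O((\log q)^2/q)\to0$ (it is also covered by the standing hypothesis $n/\log(q)^2\to\infty$). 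Then the integrand factorises over the coordinates up to $1+o(1)$, so $P_r=(1+o(1))\,C_{n,r}\big(\int_{t_{n,q}}^{\infty}v^{-1/2}e^{-nv/2}\,dv\big)^{r}$ with $C_{n,r}=\Gamma(n/2)/(\Gamma(1/2)^r\Gamma((n-r)/2))$; comparing with the case $r=1$, which equals $\alpha_q$ by construction, and using $\Gamma(n/2)/\Gamma((n-j)/2)\sim(n/2)^{j/2}$ to cancel all the Gamma factors, one arrives at $P_r=\alpha_q^{\,r}(1+o(1))$.

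Finally, by exchangeability the descending factorial moments of $N_q$ are $\bs{E}[N_q(N_q-1)\cdots(N_q-r+1)]=q(q-1)\cdots(q-r+1)\,P_r=(q\alpha_q)^r(1+o(1))\to\lambda^r$ for every fixed $r\ge1$. Since the $\text{Poisson}(\lambda)$ law is determined by its moments, the standard moment criterion gives $N_q\Rightarrow\text{Poisson}(\lambda)$, and therefore $\bs{P}(\hat{\mathcal S}\ne\emptyset)=\bs{P}(N_q\ge1)\to1-e^{-\lambda}=1-(1-\alpha)=\alpha$.

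The main obstacle is the estimate $P_r=\alpha_q^{\,r}(1+o(1))$ in the second step: establishing asymptotic independence of the tail events with enough uniformity, that is, a careful control of the incomplete Dirichlet integral in the regime $n\,t_{n,q}^2\to0$, $n\,t_{n,q}\to\infty$. Everything else is bookkeeping; note that the exact identity $\bs{P}(\hat{\mathcal S}\ne\emptyset)=\bs{P}(N_q\ge1)$ from the first step does not use orthonormality and, combined with Markov's inequality and $q\alpha_q\le-\log(1-\alpha)$ (convexity of the exponential), already re-proves the first assertion of the theorem.
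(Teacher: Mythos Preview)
Your reduction to the event $\{\max_i V_i \ge t_{n,q}\}$ with $V_i=\langle U,\x_i\rangle^2$ and $U$ uniform on the sphere is correct and coincides with the paper's starting point, and your Markov/union-bound remark indeed recovers the first assertion. For the orthonormal limit you take a genuinely different route from the paper: rather than conditioning on $\|\Z\|^2$ to decouple the radius from the direction and then invoking the classical Gumbel asymptotics for $\max_i Z_i^2$ (which is what the paper does, via Lemmas~\ref{lem:step-wise.quantiles} and~\ref{lem:Gumbel}), you stay on the sphere, exploit the Dirichlet$(1/2,\dots,1/2)$ law of $(U_1^2,\dots,U_n^2)$, and prove Poisson convergence of the exceedance count through factorial moments. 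This is a valid alternative --- Gumbel for the maximum and Poisson for the exceedance process are two faces of the same extreme-value phenomenon --- but it is heavier: your Laplace estimate for the incomplete Dirichlet integral reproves from scratch the quantile expansion that the paper simply imports from standard extreme-value theory. The paper's conditioning trick is shorter because once $\|\Z\|^2$ is replaced by $n+O_p(n^{1/2})$, the problem reduces to a textbook statement about i.i.d.\ standard normals.

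There is, however, a genuine gap: you do not address the middle assertion at all, namely the uniform upper bound $\bs{P}(\hat{{\mathcal S}}\ne\emptyset)\le\alpha+o(1)$ for \emph{arbitrary} unit regressors $\x_i$ under $q\to\infty$ and $n/\log(q)^2\to\infty$. Your Dirichlet/Poisson argument is specific to the orthonormal case; for general designs the $V_i$ are neither exchangeable nor asymptotically independent, and positively correlated $\x_i$'s can make pairwise exceedance probabilities exceed $\alpha_q^2$, so the factorial-moment route yields nothing sharper than the union bound $-\log(1-\alpha)$ you already have. The paper supplies the missing ingredient: after the Tshebyshev--Cantelli step it applies the \v{S}id\'{a}k inequality, which gives $\bs{P}\bigl(\max_i (\x_i^\top\Z)^2\ge x\bigr)\le \bs{P}\bigl(\max_i Z_i^2\ge x\bigr)$ for i.i.d.\ standard normals $Z_i$, with equality precisely in the orthonormal case. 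This Gaussian comparison is what upgrades $-\log(1-\alpha)$ to $\alpha+o(1)$ uniformly in the design, and it is absent from your proposal.
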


If $\bmu \ne \bs{0}$ we suppose that $\bmu=\sum_{\x_i \in {\mathcal S}_{*}}\beta_i\x_i$ where $ {\mathcal S}_{*}$ is a subset of size $k_{*}<n$ and the $\x_i\in  {\mathcal S}_{*}$ are linearly independent. For any subset ${\mathcal S}$ we denote the linear subspace of $\R^n$ spanned by the $\x_i\in{\mathcal S}$ by $\V_{{\mathcal S}}$ and the orthogonal complement of this subspace by $\V_{{\mathcal S}}^{\perp}$. The orthogonal projection onto $\V_{{\mathcal S}}^\perp$ is denoted by $Q_{\mathcal S}$  and for any $\x_i \notin {\mathcal S}$ we write
\[
	\x_{{\mathcal S},i}^{} \ := \ \|Q_{\mathcal S}\x_i\|^{-1} Q_{\mathcal S} \x_i
\]
(with $0^{-1} \bs{0} := \bs{0}$).

With the above notation we have
\begin{Theorem}[Consistency of step-wise choice, general design]
\label{thm:consistency.general}
Suppose that
\[
	\bmu \ \in \ \V_{{\mathcal S}_*}
\]
and that the two following assumptions hold:\\
(A.1) \ $\min(n,q)/k_* \to \infty$ and $\log(q)^2/n \to 0$, and

\noindent
(A.2) \ for some fixed $\tau > 2$,
\[
	\min_{\x_j \in {\mathcal S}_*, {\mathcal S} \subset {\mathcal S}_* \setminus \{\x_j\}, \x_i \notin {\mathcal S}_*}
		\, \frac{|\x_{{\mathcal S},j}^\top\bmu| - |\x_{{\mathcal S},i}^\top\bmu|}
			{\sqrt{n \sigma^2 + \|\bmu\|^2}}
	\ \ge \ \frac{\sqrt{\tau \log q} + 2\sqrt{k_*}}{\sqrt{n}} .
\]
Then the step-wise procedure yields a random set $\hat{{\mathcal S}}$ such that
\[
	\bs{P}({\mathcal S}_* \subset \hat{{\mathcal S}}) \ \to \ 1
	\quad\text{and}\quad
	\bs{P}({\mathcal S}_* \subsetneq \hat{{\mathcal S}}) \ \le \ \alpha + o(1) ,
\]
\end{Theorem}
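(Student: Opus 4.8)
The statement decomposes into two parts: $\bs{P}({\mathcal S}_*\subset\hat{{\mathcal S}})\to1$ (no relevant covariate is missed) and $\bs{P}({\mathcal S}_*\subsetneq\hat{{\mathcal S}})\le\alpha+o(1)$ (no spurious covariate is kept). Throughout I would work with the exact reduction identity $\rss_{|{\mathcal S}|,+i}=\rss_{|{\mathcal S}|}-(\x_{{\mathcal S},i}^\top\y)^2$, where for ${\mathcal S}\subset{\mathcal S}_*$ we have $\x_{{\mathcal S},i}^\top\y=\x_{{\mathcal S},i}^\top\bmu+\sigma\,\x_{{\mathcal S},i}^\top\Z$ with $\x_{{\mathcal S},i}^\top\Z\sim N(0,1)$, together with the elementary facts $\text{Beta}_{1,m}(w)=1-(1-w)^m$ and $\text{Beta}_{a,1/2}(x)\le x^a/(aB(a,1/2)\sqrt{1-x})$ with $aB(a,1/2)\sim\sqrt{\pi a}$. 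The first step is to fix a high‑probability event $\Omega_n$ controlling the Gaussian fluctuations of the orthogonalised regressors, distinguishing genuine from noise regressors: writing $P_*=\bs{I}-Q_{{\mathcal S}_*}$ for the projection onto the fixed $k_*$‑dimensional space $\V_{{\mathcal S}_*}$, one notes that every unit vector $\x_{{\mathcal S},j}$ with ${\mathcal S}\subset{\mathcal S}_*$ and $\x_j\in{\mathcal S}_*\setminus{\mathcal S}$ lies in $\V_{{\mathcal S}_*}$, so $|\x_{{\mathcal S},j}^\top\Z|\le\|P_*\Z\|\le\sqrt{k_*}+o(\sqrt{\log q})$ uniformly \emph{with no union bound over subsets}; whereas for $\x_i\notin{\mathcal S}_*$ one writes $\x_{{\mathcal S},i}$ as a multiple of the fixed unit vector $Q_{{\mathcal S}_*}\x_i/\|Q_{{\mathcal S}_*}\x_i\|$ plus a vector in $\V_{{\mathcal S}_*}$, obtaining $|\x_{{\mathcal S},i}^\top\Z|\le\max_{\x_\ell\notin{\mathcal S}_*}|(Q_{{\mathcal S}_*}\x_\ell/\|Q_{{\mathcal S}_*}\x_\ell\|)^\top\Z|+\|P_*\Z\|\le\sqrt{2\log q}+\sqrt{k_*}+o(\sqrt{\log q})$. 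On $\Omega_n$ I would also require $\|Q_{{\mathcal S}}\y\|^2\le(n\sigma^2+\|\bmu\|^2)(1+o(1))$ for all ${\mathcal S}\subset{\mathcal S}_*$, which follows from $\chi^2$/Gaussian concentration union‑bounded over the $2^{k_*}$ subsets (harmless since $k_*=o(\min(n,q))$ by (A.1)). That genuine regressors cost only $\sqrt{k_*}$ while noise regressors cost $\sqrt{2\log q}+\sqrt{k_*}$ is precisely what the two summands $\sqrt{\tau\log q}$ and $2\sqrt{k_*}$ in (A.2) are calibrated against.

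\emph{Step 2 (the procedure can neither stall nor pick a false covariate).} On $\Omega_n$ I would induct along the trajectory of the step‑wise procedure. If the current set is ${\mathcal S}\subsetneq{\mathcal S}_*$, let $\x_{j^*}$ maximise $|\x_{{\mathcal S},j}^\top\bmu|$ over $\x_j\in{\mathcal S}_*\setminus{\mathcal S}$. Using (A.2) with $\min_i|\x_{{\mathcal S},i}^\top\bmu|\ge0$ and $\sqrt{\sigma^2+\|\bmu\|^2/n}\ge\sigma$, the Step 1 bounds give, for every $\x_i\notin{\mathcal S}_*$, $|\x_{{\mathcal S},j^*}^\top\y|-|\x_{{\mathcal S},i}^\top\y|\ge\sigma\sqrt{\log q}\,(\sqrt{\tau}-\sqrt{2}-o(1))>0$ for large $q$: the $\sqrt{k_*}$ terms cancel and $\tau>2$ wins. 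Hence the best candidate $\x_b$ lies in ${\mathcal S}_*$, so the trajectory stays inside ${\mathcal S}_*$. Moreover $(\x_{{\mathcal S},j^*}^\top\y)^2\ge\tau\log q\,(\sigma^2+\|\bmu\|^2/n)(1-o(1))$, so with the $\|Q_{{\mathcal S}}\y\|^2$ bound, $\rss_{|{\mathcal S}|,+b}/\rss_{|{\mathcal S}|}\le1-\tfrac{\tau\log q}{n}(1-o(1))$; feeding this into the Beta estimates (and $k_*=o(n)$) yields $w_b:=\text{Beta}_{(n-|{\mathcal S}|-1)/2,1/2}(\rss_{|{\mathcal S}|,+b}/\rss_{|{\mathcal S}|})\le q^{-\tau/2+o(1)}$, hence $P_G(\x_b)=1-(1-w_b)^{q-|{\mathcal S}|}\le q^{1-\tau/2+o(1)}\to0<\alpha$. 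So the candidate is accepted at each of the $\le k_*$ steps and the trajectory reaches ${\mathcal S}_*$; therefore $\bs{P}({\mathcal S}_*\subset\hat{{\mathcal S}})\ge\bs{P}(\Omega_n)\to1$.

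\emph{Step 3 (after ${\mathcal S}_*$ at most one more step, with probability $\le\alpha+o(1)$).} Since $\bmu\in\V_{{\mathcal S}_*}$, at the state ${\mathcal S}={\mathcal S}_*$ the residual is $Q_{{\mathcal S}_*}\y=\sigma Q_{{\mathcal S}_*}\Z$, i.e.\ $\sigma$ times a standard Gaussian on the $(n-k_*)$‑dimensional space $\V_{{\mathcal S}_*}^\perp$; and one checks that the parameters $q-k_*$, $(n-k_*-1)/2$ and the reference $\rss_{k_*}=\|Q_{{\mathcal S}_*}\y\|^2$ in $P_G(\x_b)$ coincide exactly with the first step of a fresh step‑wise procedure, with \emph{no} signal, run on $n-k_*$ observations and the $q-k_*$ unit regressors $Q_{{\mathcal S}_*}\x_i/\|Q_{{\mathcal S}_*}\x_i\|$, $\x_i\notin{\mathcal S}_*$. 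Theorem~\ref{thm:consistency.0} then applies — its hypotheses $q-k_*\to\infty$ and $(n-k_*)/\log(q-k_*)^2\to\infty$ follow from (A.1) and $k_*=o(n)$ — and shows that the probability $\bs{P}(E)$ of taking one more step is $\le\alpha+o(1)$, uniformly in the design. Finally, on $\Omega_n$ any trajectory with $\hat{{\mathcal S}}\supsetneq{\mathcal S}_*$ must pass through ${\mathcal S}_*$, since by Step 2 no covariate outside ${\mathcal S}_*$ can be inserted before ${\mathcal S}_*$ is complete; hence $\{{\mathcal S}_*\subsetneq\hat{{\mathcal S}}\}\cap\Omega_n\subseteq E$, and $\bs{P}({\mathcal S}_*\subsetneq\hat{{\mathcal S}})\le\bs{P}(\Omega_n^c)+\bs{P}(E)\le\alpha+o(1)$.

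\emph{Main obstacle.} The hard part is Steps 1–2: making the noise bound sharp enough that (A.2) with merely $\tau>2$ suffices. A crude union bound over the $2^{k_*}$ intermediate subsets would inflate the genuine‑regressor fluctuations to order $\sqrt{k_*\log 2}$ and would force a much larger signal gap; the remedy is the geometric observation that all orthogonalised true regressors lie in the \emph{fixed} $k_*$‑dimensional subspace $\V_{{\mathcal S}_*}$, so their fluctuations are uniformly governed by $\|P_*\Z\|\sim\sqrt{k_*}$, and the remaining $\sqrt{2\log q}$ is paid only against the $q-k_*$ fixed unit vectors $Q_{{\mathcal S}_*}\x_i/\|Q_{{\mathcal S}_*}\x_i\|$. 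Keeping the $\sqrt{\pi a}$ factor (rather than only $x^a$) in the Beta tail bound is likewise what prevents a spurious $\sqrt{n}$ factor from surviving in the $P_G$ estimate. Everything else is Gaussian concentration and bookkeeping.
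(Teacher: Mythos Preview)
Your proposal is correct and follows essentially the same route as the paper: the decisive geometric observation that every $\x_{{\mathcal S},j}$ with $\x_j\in{\mathcal S}_*$, ${\mathcal S}\subset{\mathcal S}_*\setminus\{\x_j\}$ lies in the fixed space $\V_{{\mathcal S}_*}$ (so its noise is bounded by $\|P_*\Z\|\sim\sqrt{k_*}$ with no union over subsets), the splitting of $\x_{{\mathcal S},i}$ for $\x_i\notin{\mathcal S}_*$ into its $\V_{{\mathcal S}_*}^\perp$ and $\V_{{\mathcal S}_*}$ components, and the reduction to Theorem~\ref{thm:consistency.0} once ${\mathcal S}={\mathcal S}_*$ are exactly the paper's argument. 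One simplification you can adopt: the union bound over $2^{k_*}$ subsets for $\|Q_{{\mathcal S}}\y\|^2$ is unnecessary (and awkward when $k_*$ is unbounded), since $\|Q_{{\mathcal S}}\y\|\le\|\y\|$ holds deterministically and a single non-central $\chi^2$ bound gives $\|\y\|^2=(n\sigma^2+\|\bmu\|^2)(1+o_p(1))$; the paper also verifies the acceptance step via the quantile expansion of Lemma~\ref{lem:step-wise.quantiles} rather than your direct Beta-tail estimate, but the two are equivalent.
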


If the $\x_i\in {\mathcal S}_*$ are orthonormal the result can be simplified.

\begin{Theorem}[Consistency of step-wise choice, orthogonal design]
\label{thm:consistency.orthogonal}
Suppose 
\[
	\bmu \ = \ \sum_{i\in{\mathcal S}_*} \beta_i \x_i
\]
where the $\x_i$ are orthonormal and that the two following conditions hold\\
\noindent
(A.1') \ $q/k_* \to \infty$, and

\noindent
(A.2') \ for some fixed $\tau > 2$,
\[
	\min_{i\in {\mathcal S}_*}
		\, \frac{|\beta_i|}{\sqrt{n\sigma^2 + \sum_{\x_i \in{\mathcal S}_*}\beta_i^2}}
	\ \ge \ \frac{\sqrt{\tau \log q} + \sqrt{2 \log k_*}}{\sqrt{n}} .
\]
Then step-wise procedure yields a random set $\hat{{\mathcal S}}$ such that
\[
	\bs{P}({\mathcal S}_* \subset \hat{{\mathcal S}}) \ \to \ 1
	\quad\text{and}\quad
	\bs{P}({\mathcal S}_* \subsetneq \hat{{\mathcal S}}) \ \le \ \alpha + o(1) .
\]
\end{Theorem}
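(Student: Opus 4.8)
The plan is to use orthonormality to turn the step-wise rule into a transparent greedy rule, then to prove the two claims separately, deducing the false-positive bound almost directly from Theorem~\ref{thm:consistency.0}.

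\medskip
\textbf{Reduction.} Because the $\x_i$ are orthonormal, $Q_{\mathcal{S}}\x_i=\x_i$ for every $\mathcal{S}$ and every $\x_i\notin\mathcal{S}$, so the reduction of the residual sum of squares obtained by adjoining $\x_i$ equals $R_i:=(\x_i^\top\y)^2$, \emph{irrespective of $\mathcal{S}$}. Hence the step-wise procedure just scans the covariates in decreasing order of $R_i$, keeping each while its $P_G$-value (\ref{equ:P_val_3}) is $<\alpha$ and stopping at the first rejection; after $m$ covariates have been taken, $\rss_m=\|\y\|^2-\sum_{l=1}^mR_{(l)}$ with $R_{(1)}\ge R_{(2)}\ge\cdots$. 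Putting $g_i:=\x_i^\top\Z\sim N(0,1)$ we have $R_i=(\beta_i+\sigma g_i)^2$ for $\x_i\in\mathcal{S}_*$ and $R_i=\sigma^2g_i^2$ for $\x_i\notin\mathcal{S}_*$, where $(g_i)_i$ and $\|Q_{\mathcal{S}_*}\Z\|^2$ come from one standard Gaussian on $\R^n$. I shall also use the elementary bound
\[
  \text{Beta}_{a,1/2}(1-t)\ \le\ \frac{(1-t)^a}{\sqrt{\pi a t}}\ \le\ \frac{e^{-at}}{\sqrt{\pi a t}}\qquad(a\ge1,\ 0<t<1),
\]
obtained by writing $\text{Beta}_{a,1/2}(1-t)=\bs{P}(\text{Beta}(1/2,a)\ge t)$, bounding $v^{-1/2}\le t^{-1/2}$ inside the incomplete-Beta integral, and invoking Wendel's inequality $\Gamma(a+\tfrac12)\le a^{1/2}\Gamma(a)$. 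The polynomially small prefactor $(\pi at)^{-1/2}$, in place of the cruder $O(\sqrt n)$, is what will make the argument work in the regime $q=n^{o(1)}$.

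\medskip
\textbf{A good event.} Fix $\delta,\delta'>0$ with $\delta+\delta'<\tau-2$, set $\rho^2:=1+\|\bmu\|^2/(n\sigma^2)$, and let $\mathcal{G}$ be the intersection of the events
\[
  \max_{\x_i\notin\mathcal{S}_*}g_i^2\le(2+\delta)\log q,\qquad
  \min_{\x_i\in\mathcal{S}_*}(\beta_i+\sigma g_i)^2\ge\sigma^2(\tau-\delta')\rho^2\log q,\qquad
  \|\y\|^2\le(\|\bmu\|^2+n\sigma^2)(1+\eta_n)
\]
for a suitable fixed $\eta_n\to0$. Then $\bs{P}(\mathcal{G})\to1$: the first is a Gaussian maximal inequality; the third follows from $\chi^2_n$-concentration together with $|\bmu^\top\Z|=O_{\bs{P}}(\|\bmu\|\sqrt{\log n})$ and the arithmetic--geometric inequality; and for the second, a union bound over $\x_i\in\mathcal{S}_*$ combined with (A.2') — which gives $|\beta_i|/\sigma\ge(\sqrt{\tau\log q}+\sqrt{2\log k_*})\rho$ — bounds the complement by $\sum_{\x_i\in\mathcal{S}_*}\exp(-\tfrac12(c\sqrt{\log q}+\sqrt{2\log k_*})^2)\le k_*\cdot q^{-c^2/2}/k_*=q^{-c^2/2}$ for a constant $c=c(\tau,\delta')>0$; here the term $\sqrt{2\log k_*}$ in (A.2'), in place of the cruder $2\sqrt{k_*}$ of Theorem~\ref{thm:consistency.general}, is exactly what absorbs the $k_*$ from the union bound. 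I note en passant that (A.2') forces $k_*\le n/(\tau\log q)$, hence $k_*=o(n)$ and (via (A.1')) $q\to\infty$, $n-k_*\to\infty$, $q-k_*\to\infty$.

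\medskip
\textbf{No false negatives.} On $\mathcal{G}$, since $\tau-\delta'>2+\delta$, every $R_i$ with $\x_i\in\mathcal{S}_*$ strictly exceeds every $R_i$ with $\x_i\notin\mathcal{S}_*$, so the $k_*$ largest of the $R_i$ are precisely those indexed by $\mathcal{S}_*$; therefore, provided each is accepted, the first $k_*$ covariates scanned are the members of $\mathcal{S}_*$ in decreasing order of $R_i$. At such a step $m\in\{0,\dots,k_*-1\}$ the scanned covariate $\x_b$ satisfies $R_b\ge\min_{\x_i\in\mathcal{S}_*}R_i\ge\sigma^2(\tau-\delta')\rho^2\log q$ while $\rss_m\le\|\y\|^2\le n\sigma^2\rho^2(1+\eta_n)$, so $t:=R_b/\rss_m\ge(\tau-\delta')\log q/n\cdot(1-o(1))$; since $m=o(n)$, $a:=(n-m-1)/2=(n/2)(1-o(1))$ and $at\ge\tfrac12(\tau-\delta')\log q\,(1-o(1))$, whence by the Beta bound the inner value in (\ref{equ:P_val_3}) is at most $q^{-(\tau-\delta')(1-o(1))/2}(\log q)^{-1/2}$, which — because $\tau-\delta'>2$ — is $o(1/q)=o\bigl(1-(1-\alpha)^{1/(q-m)}\bigr)$; thus $P_G(\x_b)<\alpha$ and $\x_b$ is accepted. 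By induction over $m$ the whole of $\mathcal{S}_*$ is selected, so $\bs{P}(\mathcal{S}_*\subset\hat{\mathcal{S}})\ge\bs{P}(\mathcal{G})\to1$.

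\medskip
\textbf{Controlled false positives, and the main difficulty.} Write $\bs{P}(\mathcal{S}_*\subsetneq\hat{\mathcal{S}})\le\bs{P}(\mathcal{G}^c)+\bs{P}(\{\mathcal{S}_*\subsetneq\hat{\mathcal{S}}\}\cap\mathcal{G})$. On $\mathcal{G}$ the selected set after $k_*$ steps is exactly $\mathcal{S}_*$, and since $\bmu\in\V_{\mathcal{S}_*}$ the remaining data are the residual $Q_{\mathcal{S}_*}\y=\sigma Q_{\mathcal{S}_*}\Z$ — a scaled standard Gaussian on the $(n-k_*)$-dimensional space $\V_{\mathcal{S}_*}^\perp$ — together with the covariates $\{\x_i:\x_i\notin\mathcal{S}_*\}\subset\V_{\mathcal{S}_*}^\perp$, which are orthonormal there; using $Q_{\mathcal{S}}\x_i=\x_i$ once more one checks that the continuation of the procedure (in particular, whether it adjoins any further covariate at all) coincides step for step with a fresh run of the step-wise procedure on this null problem with parameters $(n-k_*,\,q-k_*)$. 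Hence $\{\mathcal{S}_*\subsetneq\hat{\mathcal{S}}\}\cap\mathcal{G}$ is contained in the event that the null run selects something, whose probability, by the orthonormal assertion of Theorem~\ref{thm:consistency.0} (applicable because $q-k_*\to\infty$ and $n-k_*\to\infty$), tends to $\alpha$; combining, $\bs{P}(\mathcal{S}_*\subsetneq\hat{\mathcal{S}})\le\alpha+o(1)$. The one genuinely delicate ingredient is the $P_G$-estimate of the previous step: one needs, on the one hand, the sharp behaviour of $\text{Beta}_{a,1/2}$ near $1$ (with its $(\pi at)^{-1/2}$ prefactor — a cruder estimate is useless when $q=n^{o(1)}$), pinning the critical ratio at $t_{\mathrm{crit}}\sim2\log q/n$, and, on the other hand, the fact that (A.2') delivers signal strength $\sqrt{\tau\log q}$ with $\tau$ \emph{strictly} above $2$, so that the fixed gap $\tau-2>0$ dominates all the $1+o(1)$ losses — both in $t_{\mathrm{crit}}$ and in replacing $\|\bmu\|^2+n\sigma^2$ by $\rss_m$. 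Everything else is routine concentration and bookkeeping.
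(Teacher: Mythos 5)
Your proposal is correct, and its overall architecture is the same as the paper's: first show that on a high-probability event the $k_*$ signal covariates are scanned first and all accepted (using Gaussian maximal bounds of size $\sqrt{2\log q}$ and $\sqrt{2\log k_*}$ against the $\sqrt{\tau\log q}$ signal strength from (A.2') with $\tau>2$), then observe that, once $\hat{{\mathcal S}}\supset{\mathcal S}_*$, any further selection is a pure-noise event whose probability is controlled by Theorem~\ref{thm:consistency.0}. Where you differ is in the technical device for the acceptance step: the paper compares the standardized statistic $|\x_\nu^\top\y|/\|Q_{\mathcal S}\y\|$ with the quantile threshold $\kappa_{n-k,q-k}$ and identifies that threshold as $\sqrt{(2+o(1))\log q/n}$ via the beta-quantile asymptotics of Lemmas~\ref{lem:beta.quantiles} and \ref{lem:step-wise.quantiles}, whereas you bound the $P_G$-value directly through a nonasymptotic tail bound $\text{Beta}_{a,1/2}(1-t)\le (1-t)^a/\sqrt{\pi a t}$ obtained from the incomplete-beta integral and Wendel's inequality (your bound is correct; note that once $at\gtrsim\log q$ even a bounded prefactor would do, so the sharp $(\pi a t)^{-1/2}$ is a convenience rather than a necessity). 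Two features of your write-up are actually more explicit than the paper: the reduction, via $Q_{\mathcal S}\x_i=\x_i$, of the step-wise rule to a greedy scan in decreasing order of $R_i=(\x_i^\top\y)^2$ with a stopping rule, and the exact identification of the continuation after ${\mathcal S}_*$ with a fresh null run on $(n-k_*,\,q-k_*)$ (matching beta parameters $(n-k_*-1)/2$ and $q-k_*$), which the paper compresses into ``a simple adaptation of Theorem~\ref{thm:consistency.0}''; your observation that (A.2') forces $k_*\le n/(\tau\log q)$, so that $k_*=o(n)$ without an extra hypothesis, also makes explicit why (A.1') alone suffices in the orthogonal case. Both routes rely on the same decomposition and the same use of Theorem~\ref{thm:consistency.0}, so the difference is one of technique within one proof strategy rather than a genuinely different proof.
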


It  is of interest to compare Theorem~\ref{thm:consistency.orthogonal} with Theorem 1 of \cite{LOKTAYTIB214} for lasso regression. There they prove (in our notation) that the first $m_*$ covariates entering the lasso path are, with probability tending to 1, those in ${\mathcal S}_*$. Our condition (A.2')  is replaced by the weaker
 \[\min_{\x_i\in {\mathcal S}_*}\, \vert\beta_i\vert-\sigma \sqrt{2\log(q)} \rightarrow \infty.\]
 However their result is restricted to $q<n$, they use the given $\sigma$, not an estimate, and there is no termination rule. See their Remark 1 on page 420 and their Section 6.

\section{Simulations and real data}
\label{sec:examples}
A detailed comparison of {\it gausscov} with the following 13 selection procedures is given in \cite{DAV21}: lasso (\cite{TIB96}), knockoff (\cite{CAFAJALV2018}), scaled sparse linear regression (\cite{SUNZHA12}), SIS (Sure Independence Screening) (\cite{FANLV08}), desparsified lasso (\cite{ZHAZHA14}), stability selection (\cite{MEIBUE10}), ridge regression (\cite{BUEH13}), multiple splitting (\cite{WASSROED09}), EMVS (Expectation-Maximization Approach to Bayesian Variable Selection) (\cite{ROCGEO14}) and   Spike and Slab Regression (\cite{SCO21}), Threshold Adaptive Validation (\cite{LFL21}), graphical lasso (\cite{FHT08,FRHATI19})  and huge (High-Dimensional Undirected Graph Estimation) (\cite{JFLRLWLZ21}.  

The comparison is based on two simulations and the following seven data sets: riboflavin \cite{BUEKALMEI14}, leukemia \cite{GOLETAL99}, lymphoma \cite{ALI00} and \cite{DETBUH02}, osteoarthritis \cite{COXBATT17}, the Boston Housing data set \cite{HARRUB78} , sunspot data \cite{SIDB}  and the American Business Cycle data \cite{ABC86}.  All the comparisons were done using R version 4.1.2 (2021-11-01) and the package {\it gausscov} with the default values  for $\alpha=0.01$ and $kmn=10$.

\subsection{Simulations} \label{sec:sims}
 \subsubsection{Tutorial 1} \label{sec:sim1}
The knockoff procedure is explained in \cite{CAFAJALV2018}. The tutorial in question is Tutorial 1 of
{\footnotesize
\begin{verbatim}
https://web.stanford.edu/group/candes/knockoffs/software/knockoff/  
\end{verbatim}
}
\noindent
which gives a simulation using knockoff. The dimensions are $(n,q)=(1000,1000)$. The 1000 covariates are Gaussian and dependent with a Toeplitz covariance matrix $\Sigma$ given by $\Sigma_{i,j}=\rho^{\vert i-j\vert}$ with $\rho=0.25$. Of the covariates $p=60$ are chosen at random and denoted by $\X_i,i=1,\ldots,60$. The dependent variable $\Y$ is given by  
\[\Y=\sum_{i=1}^{60}\beta_iX_i+N_{1000}(\boldsymbol{0},\boldsymbol{I})\]
with all the $\beta_i=amplitude/\sqrt{n}$ with $amplitude=4.5$. These are the particular values chosen for the first simulation discussed below. There is a second tutorial with a binary dependent variable. The results are similar and not given here but are available in \cite{DAV18} with however $\alpha=0.05.$ 
\begin{table}[h]
\begin{center}
{\footnotesize
\begin{tabular}{ccccc}
&\multicolumn{2}{c}{Tutorial 1}\\
method&fp&fn&time\\
\hline
lasso&68.7&1.5&12.6\\
knockoff&6.8&10.4&74.1\\
$\nu=1$&0.0&53.1&0.05\\
$\nu=5$&2.5&14.5&0.19\\
$\nu=10$&5.6&7.5&0.23\\
\hline
\quad
\end{tabular}
}
\caption{Comparison of lasso,  knockoff and Gaussian
    covariates based on 10 simulations with
    $(n,q,p,amplitiude,\rho)=(1000,1000,60,4.5,0.25)$. \label{tab2}}
\end{center}
\end{table}

The number of false positives is denoted by `fp' and false negatives by `fn'. The total number of covariates selected is given by 60-fn+fp. The time for each simulation is given in seconds.  The first line for lasso shows that on average it selects about 130 covariates each selection requiring about 12 seconds. Almost all the relevant covariates are chosen but also on average about 70 false ones. Knockoff selects on average about 60 covariates of which about 7 are false positives. It requires about 74 seconds for each selection. The Gaussian covariate method with default value $\alpha=0.01$ selects on average just 7 covariates. None of  these are false positives. Putting $\nu=5$ results in $60-14.5+2.5\approx 48$ covariates being selected. To judge how many of these are false positives we use {\it fnfp} as described in Section~\ref{sec:rel_subset}. As $fnpf(1000,1000,0.01,c(5,10), nufp)=c(1.345,4.615)$ we expect about 1.5 false positives if $\nu=5$ and about $4.6$ if $\nu=10$.  These numbers agree with the Table~\ref{tab1} derived from simulations and also with the values in Table~\ref{tab2}. Thus in terms of minimizing the number of false decisions $\nu=10$ would seem to be the best choice.  We emphasize here that the choice $\nu=10$ results from using {\it fnfp} and not by choosing the best value on running Tutorial 1.

\subsubsection{Random graphs} \label{sec:rangrph}
This is based on \cite{MEIBUE06} but with $(n,q)=(1000,600)$.  On the last line of page~13 of  \cite{MEIBUE06} the expression $\varphi(d/\sqrt{p})$ with $\varphi$ the density of the standard normal distribution and $d$ the Euclidean distance is clearly false. It has been replaced by $\varphi(23.5d)$ which gives about 1800 nodes compared with the 1747 of \cite{MEIBUE06}. The Meinshausen-B\"uhlmann method with $\alpha=0.05$ and non-directed edges resulted in  1109 edges of which two were false positives giving 640 false negatives.  

 One simulation of the modified (as described above) Meinshausen-B\"uhlmann random graph method produced 1823 edges. The Gaussian method described in Section~\ref{sec:graphs} yielded 1590 edges of which two were false positive and  235 were false negatives. The time required was about 9 seconds.  

Putting $\nu=2$ resulted 1821 edges, that is 231 more than with $\nu=1$. Using {\it fnfp} with $p=0.01$, $\nu=2$, $gr=T$ and $nsim=10^5$ the mean number of false positives per covariate was $0.00915$ suggesting a Poisson distribution with mean  5.5 for the number of false positives.  Thus of the 231 additional edges one can expect that between one and 12 are false positives. The actual number was nine with 11 false negatives.

In \cite{DAV21} the Gaussian covariate procedure is compared with the following three procedure for constructing dependency graphs: Threshold Adaptive Validation (\cite{LFL21}),  huge (High-Dimensional Undirected Graph Estimation) (\cite{JFLRLWLZ21} and graphical lasso (\cite{FHT08,FRHATI19}). The graph was constructed as above but with $(n,q)=1000$. Table~\ref{tab:rnd_graph} is Table 11 of  \cite{DAV21} with time measured in seconds. 
\begin{table}[ht]
\begin{center}
\begin{tabular}{rcccc}
\multicolumn{5}{c}{Random graph (1000,1000)}\\
\hline\\
method&no. edges&$fp$&$fn$&time\\
\hline\\
{\it fgr1st} &1820&1&3&27.2\\
{\it thav.glasso} &1776&218&265&90\\
{\it huge}&1839&30&14&25.5\\
{\it glasso}&1840&293&276&14.1\\
\end{tabular}
\caption{The results for one simulation of the random graph. \label{tab:rnd_graph}}
\end{center}
\end{table}

\subsubsection{Riboflavin simulations}
The following is taken from \cite{DAV21}. The riboflavin covariates are standardized to have mean zero and variance one.  Four covariates $\{\x_{i_1},\x_{i_2},\x_{i_3},\x_{i_4}\}$ are chosen at random and the dependent variable $\Y$ generated as 
\[\Y=20\sum_{j=1}^4 \x_{i_j} +\bs{\varepsilon}\]
where $\bs{\varepsilon}$ is standard Gaussian noise. Table~\ref{tab:ribo_sim} gives the results of 100 simulations.  
\begin{table}[h]
\begin{center}
\begin{tabular}{rcccc}
\multicolumn{5}{c}{Riboflavin: 100 simulations; (*) 70, (**) 72, (***) 18 simulations }\\
\hline\\
method&$fp$&$fn$&\% correct&time\\
{\it f1st}&0.77&0.72&75&1 (0.026)\\
{\it f3st,m=1}& 0.18& 0.17&93& 5\\
{\it f3st,m=2}& 0.07&0.05&98&24\\
{\it lasso}&25.0& 0.07& 0&19\\
{\it scalreg}& 16.3&1.08& 0&85\\
{\it SIS} & 13.5& 2.45&3&150\\
{\it stability}& 0.24& 2.16&8&96\\
{\it multi-split(*)}&0.23&1.59&36&1570\\
{\it BoomSpikeSlab(**)}& 0.33& 0.42&87&1540\\
{\it EMVS} & 0.00& 4.00& 0&27\\
{\it knockoff}& ?& ?&?&$>$150000\\
{\it desparse.lasso}& ?& ?&?&$>$150000\\
{\it ridge(***)}& 0.00&4.00&0.00&10000\\
\end{tabular}
\caption{Columns 2-4 give the average number of false positives, the average number of false negatives and the \% of correct selections. Column 5 gives the time compared with Gaussian covariates which required on average 0.026 seconds per simulation. \label{tab:ribo_sim}}
\end{center}
\end{table}

\subsection{Real data}
\label{sec:real_data}
 
 \subsubsection{Riboflavin data} \label{sec:ribp}
Table~\ref{tab:riboflavin} is taken from \cite{DAV21} and gives the results of applying the ten model based procedures to the riboflavin data. This particular data set has proved difficult for model based procedures, see  \cite{DEBUEMEME15} and \cite{LOCK17}. Table~\ref{tab:riboflavin} gives the results of applying the ten model based procedures to the riboflavin data. The columns are the procedures, the number of selected covariates and false positives $(k,fp)$, whether P-values are given the sum of squared residuals $\ss$ and the time as compared with {\it f1st} which took 0.024 seconds. A false positive is defined as a covariate with a Gaussian P-value exceeding 0.99.

Table~\ref{tab:f3st} gives the first five approximations of the 129 yielded by {\it f3st} with $kmn=15$ and $m=5$. The first line of Table~\ref{tab:riboflavin} was number 37 on the list.
\begin{table} [h]
\begin{center}
\begin{tabular}{rcccc}
\multicolumn{5}{c}{riboflavin (71,4088)}\\
\hline\\
method& $k,fp$&P-values&$\bs{ss}$&time\\
\hline\\
{\it f1st}&4,0&yes& 8.45&1 (0.024)\\
{\it f3st,m=1}&6,0&yes&6.21&4\\
{\it lasso}&32,30&no&2.05&25.7\\
{\it knockoff}&0,0&no&*&$>$7e+05 (killed)\\
{\it scalreg}&9,6&no&10.62&28.7\\
{\it SIS}&4,0&no&11.49&89\\
{\it desparsified lasso}&0,0&yes&*&130012\\
{\it stability}&0,0&no&*&103\\
{\it ridge.proj}&0,0&yes&*&12248\\
{\it multi-split}&4,2&yes&17.45&1421\\
{\it EMVS}&0,0&no&*&22\\
{\it BoomSpikeSlab}&(5,2)&no&16.92&2290\\
\end{tabular}
\caption{The results for the riboflavin data. \label{tab:riboflavin}}
\end{center}
\end{table}

\begin{table}[hb]
\begin{center}
\begin{tabular}{cccccccccc}
$\bs{ss}$&\multicolumn{9}{c}{Riboflavin: Included covariates}\\
\hline
 3.72& 4004& 2564&   73&  315& 2936 & 997&  991& 1661&  3255\\
4.23& 4004& 2564 &  73 & 315& 2936&  997& 1661& 2048&*\\
  4.87& 4004& 2564 &144& 1131& 3138& 2186& 1771&*&*\\
  5.43& 1279& 4004& 2564&   73& 1131& 2140&    *&*&*\\
  5.47& 4003& 2564 &  69&1425&  413& 2484& 1194&*&*\\
\end{tabular}
\caption{The first five of the 129 approximations given by {\it f3st} with $kmn=15$ and $m=5$ in order of the sum of squared residuals $\bs{ss}$.  \label{tab:f3st}}
\end{center}
\end{table}

\subsubsection{Lagged covariates} \label{sec:lag_cov}

The American Business Cycle data we considered are the USA quarterly data 1919-1941,1947-1983 available from
\begin{verbatim}
http://data.nber.org/data/abc/
\end{verbatim}
We merged the two time intervals and used the values given in
1972\$. The dependent variable was taken to be the Gross national
Product (GNP72).   The following further indices (see the above data source for
an explanation) were included each with lags of 1:16 giving 352 covariates in all:\\
\noindent
CPRATE, CORPYIELD, M1, M2, BASE, CSTOCK, WRICE67, PRODUR72,
NONRES72, IRES72, DBUSI72, CDUR72, CNDUR72, XPT72, MPT72, GOVPUR72,
NCSPDE72, NCSBS72, NCSCON72,CCSPDE72,CCSBS72\\
\noindent
We are not economists so whether this makes sense or not we leave to
the reader. The Gaussian step-wise procedure in Table~\ref{tab:abcq_lag} selected the covariates 1,18,180 which are lag 1 of GNP72, lag 2 of CPRATE and lag 4  of IRES72.
\begin{table} [ht]
\begin{center}
\begin{tabular}{rcccc}
\multicolumn{5}{c}{American Business Cycle (224,352)}\\
\hline\\
method& $k,fp$&P-values&$\bs{ss}$&time\\
\hline\\                     
{\it f1st}&3,0&yes& 18765&1 (0.039)\\
{\it f3st,m=1}&6,0&yes&18405&3\\
{\it lasso}&4,2&no&24980&3\\
{\it scalreg}&83,69&no&4960&19\\
{\it SIS}&5,0&no&17854&16\\
{\it desparsified lasso}&190,189&yes&40&1000\\
{\it stability}&2,0&no&25460&12\\
{\it ridge.proj}&103,97&yes&8130&30\\
{\it multi.split}&2,0&yes&25460&200\\
{\it EMVS}&223, NaN&no&0&2.3\\
{\it BoomSpikeSlab}&(4,0)&no&48750&65\\
\end{tabular}
\caption{The results for the American Business Cycle data with lags 1:16. \label{tab:abcq_lag}}
\end{center}
\end{table}

\subsection{Graphs}
The results for the covariates of the riboflavin data were as follows. The procedures {\it thav.glasso} and {\it glasso} were killed after one hour with no results, {\it huge} took 35 seconds but returned zero edges. The Gaussian covariate procedure with the default values took 16 seconds and yielded a directed graph with 4491 edges and an undirected graph with 3882 edges.

\section{Appendix: Technical details and proofs}\label{sec:append}

\subsection{Details and Proofs for Section~\ref{sec:F-test.etc.2}}
\label{app:F-test.etc}

In what follows, we utilize some basic facts about multivariate
Gaussian distributions, see for example \cite{MARKENBIB79}.

\paragraph{Special distributions.}

Let $\b_1,\ldots,\b_p$ be an orthonormal basis of a linear subspace $\V$ of $\R^n$, and let $\Z \sim N_p(\bs{0},\bs{I})$. Then $\tilde{\Z} := \sum_{i=1}^p Z_i \b_i$ has a standard Gaussian distribution on $\V$ with $\|\Z\| = \|\tilde{\Z}\|$. 

The chi-squared distribution with $p$ degrees of freedom coincides with $\text{Gamma}(p/2,2)$, where $\text{Gamma}(a,c)$ stands for the gamma distribution with shape parameter $a > 0$ and scale parameter $c > 0$. The statements of the next Lemma are well known.

\begin{Lemma}
\label{lem:Gamma.Beta}
Let $a,b,c > 0$, and let $Y_a$ and $Y_b$ be independent random variables with distributions $\mathrm{Gamma}(a,c)$ and $\mathrm{Gamma}(b,c)$, respectively. Then $Y_a + Y_b$ and $U := Y_a/(Y_a + Y_b)$ are stochastically independent with $Y_a + Y_b \sim \mathrm{Gamma}(a+b,c)$ and $U \sim \text{Beta}_{a,b}$.
\end{Lemma}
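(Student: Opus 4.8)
The plan is to verify the claim by a direct change of variables in the joint density of $(Y_a, Y_b)$. First I would write down the product density $f(y_a, y_b) = c^{-(a+b)} \Gamma(a)^{-1} \Gamma(b)^{-1} y_a^{a-1} y_b^{b-1} e^{-(y_a+y_b)/c}$ for $y_a, y_b > 0$, using independence. Then I would introduce the bijection $(y_a, y_b) \mapsto (s, u)$ with $s := y_a + y_b$ and $u := y_a/(y_a+y_b)$, whose inverse is $y_a = su$, $y_b = s(1-u)$, mapping $(0,\infty)^2$ onto $(0,\infty) \times (0,1)$. The Jacobian of the inverse map has absolute value $s$, so the transformed density becomes
\[
	g(s,u) \ = \ \frac{s}{c^{a+b} \Gamma(a)\Gamma(b)} (su)^{a-1} (s(1-u))^{b-1} e^{-s/c}
	\ = \ \Bigl( \frac{s^{a+b-1} e^{-s/c}}{c^{a+b}\Gamma(a+b)} \Bigr) \Bigl( \frac{\Gamma(a+b)}{\Gamma(a)\Gamma(b)} u^{a-1} (1-u)^{b-1} \Bigr) .
\]
This factorizes as a function of $s$ times a function of $u$, which simultaneously establishes independence of $S := Y_a + Y_b$ and $U$, identifies the first factor as the $\mathrm{Gamma}(a+b,c)$ density, and identifies the second as the $\text{Beta}_{a,b}$ density (the normalizing constant being exactly $B(a,b)^{-1} = \Gamma(a+b)/(\Gamma(a)\Gamma(b))$).

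There is essentially no obstacle here; the only points requiring a little care are confirming that the change of variables is a diffeomorphism onto the stated open set and that the Jacobian factor is $s$ (not $1/s$), after which the algebraic regrouping of the constants is routine. Since the statement is classical, an alternative would be simply to cite a standard reference such as \cite{MARKENBIB79}, but the change-of-variables argument is short enough to include in full.
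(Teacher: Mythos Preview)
Your change-of-variables argument is correct: the Jacobian is indeed $s$, the factorization goes through, and the two factors are precisely the $\mathrm{Gamma}(a+b,c)$ and $\text{Beta}_{a,b}$ densities. The paper itself does not prove this lemma at all; it simply introduces it with ``The statements of the next Lemma are well known'' and moves on (the reference \cite{MARKENBIB79} you suggest is in fact already cited by the paper for background on multivariate Gaussian facts). So you are supplying a full proof where the paper offers none, which is fine and arguably an improvement; alternatively, matching the paper's treatment would mean omitting the argument and citing a standard text.
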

With $Y_a$, $Y_b$ and $U$ as in the previous lemma, $F := (Y_a/a)/(Y_b/b) \sim \text{F}_{2a,2b}$. Note also that $U = (a/b) F/((a/b)F + 1)$ and $1 - U \sim \text{Beta}_{b,a}$. In particular, for $x > 0$,
\begin{align*}
	1 - \text{F}_{2a,2b}(x) \
	&= \ \bs{P}(F \ge x) \ = \ \bs{P} \Bigl( U \ge \frac{(a/b)x}{(a/b)x + 1} \Bigr)
		\ = \ \bs{P} \Bigl( 1 - U \le \frac{1}{(a/b)x + 1} \Bigr) \\
	&= \ \text{Beta}_{b,a} \Bigl( \frac{1}{(a/b)x + 1} \Bigr) .
\end{align*}
With $a = (q- q_0)/2$, $b = (n - q)/2$ and $x = (b/a) (\rss_0 - \rss)/\rss$, we obtain the equation
\[
	1 - \text{F}_{q - q_0, n - q}
		\Bigl( \frac{(\rss_0 - \rss)/(q- q_0)}{\rss/(n - q)} \Bigr)
	\ = \ \text{Beta}_{(n - q)/2, (q- q_0)/2} \Bigl( \frac{\rss}{\rss_0} \Bigr) ,
\]
i.e. equality two of the P-values of  Theorem~\ref{the:PF_PG}.

Lemma~\ref{lem:Gamma.Beta} implies useful facts about products of beta random variables.

\begin{Lemma}
\label{lem:Beta.Beta}
\textbf{(i)} \ For $a,b,c > 0$, let $U \sim \text{Beta}(a,b)$ and $V \sim \text{Beta}(a+b,c)$ be stochastically independent. Then $UV \sim \text{Beta}(a,b+c)$.

\noindent
\textbf{(ii)} \ For $a, \delta > 0$ and $k \in \mathbb{N}$, let $U_1, \ldots, U_k$ be stochastically independent random variables such that $U_j \sim \text{Beta}(a+(j-1)\delta,\delta)$. Then $\prod_{j=1}^k U_j \sim \text{Beta}(a,k\delta)$.
\end{Lemma}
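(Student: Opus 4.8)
The plan is to derive both parts from the gamma/beta representation of Lemma~\ref{lem:Gamma.Beta}, with part \textbf{(ii)} reduced to part \textbf{(i)} by an easy induction.

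For part \textbf{(i)}, I would introduce independent auxiliary variables $Y_a \sim \mathrm{Gamma}(a,1)$, $Y_b \sim \mathrm{Gamma}(b,1)$, $Y_c \sim \mathrm{Gamma}(c,1)$ and set $S := Y_a + Y_b + Y_c$. Define $\tilde U := Y_a/(Y_a+Y_b)$ and $\tilde V := (Y_a+Y_b)/S$, so that $\tilde U\tilde V = Y_a/S$. Applying Lemma~\ref{lem:Gamma.Beta} to the independent pair $Y_a, Y_b$ shows that $Y_a+Y_b \sim \mathrm{Gamma}(a+b,1)$ and $\tilde U \sim \text{Beta}(a,b)$ are independent; since $(Y_a+Y_b,\tilde U)$ is a function of $(Y_a,Y_b)$, it is moreover independent of $Y_c$, so $\tilde U$, $Y_a+Y_b$ and $Y_c$ are mutually independent. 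A second application of Lemma~\ref{lem:Gamma.Beta}, now to the independent pair $Y_a+Y_b$ and $Y_c$, gives $\tilde V \sim \text{Beta}(a+b,c)$; as $\tilde V$ depends only on $(Y_a+Y_b,Y_c)$, it is independent of $\tilde U$. Hence $(\tilde U,\tilde V)$ has the same joint law as $(U,V)$, and it remains to identify the distribution of $Y_a/S$. But $Y_b+Y_c \sim \mathrm{Gamma}(b+c,1)$ by Lemma~\ref{lem:Gamma.Beta} and is independent of $Y_a$, so one final application yields $Y_a/S = Y_a/(Y_a + (Y_b+Y_c)) \sim \text{Beta}(a,b+c)$, which is the assertion.

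For part \textbf{(ii)}, I would induct on $k$, the case $k=1$ being immediate. Write $P_m := \prod_{j=1}^m U_j$ and assume $P_m \sim \text{Beta}(a,m\delta)$. Since $P_m$ is a measurable function of $U_1,\ldots,U_m$, it is independent of $U_{m+1} \sim \text{Beta}(a+m\delta,\delta)$. Applying part \textbf{(i)} with $(a,m\delta,\delta)$ in place of $(a,b,c)$ gives $P_{m+1} = P_m U_{m+1} \sim \text{Beta}(a, m\delta+\delta) = \text{Beta}(a,(m+1)\delta)$, closing the induction at $m=k-1$.

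I do not anticipate a real obstacle; the only point needing care is the independence bookkeeping in part \textbf{(i)} — establishing that $\tilde U$ is independent of $\tilde V$ and not merely of $Y_a+Y_b$ — which is handled by first observing that $(\tilde U, Y_a+Y_b)$, being a function of $(Y_a,Y_b)$, is independent of $Y_c$, and then noting that $\tilde V$ is a function of $(Y_a+Y_b,Y_c)$.
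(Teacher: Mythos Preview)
Your proposal is correct and follows essentially the same approach as the paper: represent the beta variables via ratios of independent gamma variables, apply Lemma~\ref{lem:Gamma.Beta} twice to identify the joint law of $(\tilde U,\tilde V)$ and once more to identify the law of the product, then deduce part~(ii) from part~(i) by induction. Your treatment is slightly more explicit about the independence of $\tilde U$ and $\tilde V$, but the argument is the same.
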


\begin{proof}[\bf Proof of Lemma~\ref{lem:Beta.Beta}]
For proving part~(i), we start with independent random variables $G_a \sim \mathrm{Gamma}(a,1)$, $G_b \sim \mathrm{Gamma}(b,1)$ and $G_c \sim \mathrm{Gamma}(c,1)$. By Lemma~\ref{lem:Gamma.Beta},
\[
	U := \frac{G_a}{G_a + G_b} \sim \text{Beta}(a,b) ,
	\quad
	G_a + G_b \sim \mathrm{Gamma}(a+b,1)
	\quad\text{and}\quad
	G_c
\]
are independent. A second application of Lemma~\ref{lem:Gamma.Beta} implies that the random variables $U$ and
\[	V := \frac{G_a + G_b}{G_a + G_b + G_c} \sim \text{Beta}(a+b,c) 
\]
are also independent so that
\[
	UV \ = \ \frac{G_a}{G_a + G_b + G_c} \ \sim \ \text{Beta}(a,b+c) ,
\]
because $G_a$ and $G_b + G_c \sim \mathrm{Gamma}(b+c,1)$ are independent.

Part~(ii) follows from part~(i) via induction.
\end{proof}

\paragraph{Proof of Theorems~\ref{the:RSS} and \ref{the:PF_PG}}
We consider firstly the case $q_0=q-1$, put 
\[ \mathbb{ V}^{\bot}_0=\{{\bf w} \in \mathbb{R}^n: {\bf
 w}^{\top}{\bf \x}=0 \text{ for all } {\bf \x} \in \mathbb{ V}_0\}\]
where $ \mathbb{ V}_0$ is the linear space spanned by the covariates
${\bf x}_i, i\in {\mathcal M}_0$.

Let ${\bf b}_i, i=1,\ldots,n$ be an orthonormal basis of
$\mathbb{R}^n$ such that
 \[  \mathbb{ V}_0=\text{span}({\bf b}_1,\ldots,{\bf b}_{q_0}) \text{  and }  {\bf b}_{q_0+1}=({\bf y}-P_{
   \mathcal{M}_0}({\bf y}))/(ss_0)^{-1/2}\] 
where $P_{\mathcal{M}_0}$ is the projection onto the subspace $\mathbb{V}_0$. We now replace ${\bf x}_{\nu}$ by a Gaussian covariate ${\bf Z}_{\nu}$  consisting of $n$ i.i.d. $N(0,1)$ random variables. By the rotational symmetry of the standard Gaussian distribution on  $\mathbb{R}^n$, $Z_j:={\bf b}_j^{\top}{\bf Z}_{\nu}$ defines stochastically independent standard Gaussian random variables $Z_1,\ldots,Z_n$. The orthogonal projection of ${\bf Z}_{\nu}$ onto $\mathbb{ V}^{\bot}_0$ is given by 
\[\tilde{{\bf Z}}_{\nu}:=\sum_{j=q_0+1}^nZ_j{\bf b}_j.\]
In particular
\[\text{span}({\bf b}_1,\ldots,{\bf b}_{q_0},{\bf Z})
=\text{span}({\bf b}_1,\ldots,{\bf b}_{q_0},\tilde{{\bf Z}})\] 
and as 
\[P_{\mathcal{M}_1}({\bf y})=P_{\mathcal{M}_0}({\bf
  y})-(ss_0)^{1/2}\frac{\tilde{{\bf Z}}_{\nu}^{\top}{\bf
    b}_{q_0+1}}{\Vert\tilde{{\bf Z}}_{\nu}\Vert^2} \tilde{{\bf Z}}_{\nu}\]  
it follows that
\[SS_1=ss_0-ss_0\frac{(\tilde{{\bf Z}}_{\nu}^{\top}{\bf
    b}_{q_0+1})^2}{\Vert 
  \tilde{{\bf Z}}_{\nu}\Vert^2}\]
and hence
 \begin{equation} \label{equ:dist_RSS}
\frac{SS_1}{ss_0}=1-\frac{(\tilde{{\bf Z}}{\nu}^{\top}{\bf
     b}_{q_0+1})^2}{\Vert 
   \tilde{{\bf Z}}_{\nu}\Vert^2}=\frac{\sum_{j=q_0+2}^n
   Z_j^2}{\sum_{j=q_0+1}^n Z_j^2}\sim
 \text{Beta}((n-q_0-1)/2,1/2).
\end{equation}
 
In the general case with $q-q_0=k>1$ the above argument may be applied inductively to show that
\[\frac{SS_1}{ss_0}=\prod_{\ell=1}^{k}U_{\ell}\]
in distribution where the $U_1,\ldots,U_k$ are stochastically independent with
\[U_{\ell}\sim \text{Beta}((n-q_0-\ell)/2,1/2)\]
We now use the standard result that if $U\sim \text{Beta}(a,b)$ and $V\sim \text{Beta}(a+b,c)$ and $U$ and $V$ are independent then $UV\sim\text{Beta}(a,b+c)$. From this it follows that 
\begin{equation} \label{equ:RSS_dist_gen}
\frac{SS_1}{ss_0}\sim \text{Beta}((n-q)/2,(q-q_0)/2)
\end{equation}
which proves Theorem~\ref{the:RSS} and the first part of Theorem~ \ref{the:PF_PG}.

To prove the second part we note that if $\chi^2_{\nu_1}$ and $\chi^2_{\nu_2}$ are independent chi-squared random variables with $\nu_1$ and $\nu_2$ degrees of freedom respectively then 
\[\frac{\chi^2_{\nu_1}/\nu_1}{\chi^2_{\nu_2}/\nu_2}\sim
\text{F}(\nu_1,\nu_2)\]
and 
\[\frac{\chi^2_{\nu_1}}{\chi^2_{\nu_1}+\chi^2_{\nu_2}}\sim
\text{Beta}(\nu_1/2,\nu_2/2) .\]
From this it follows that for all $x>0$
\[\text{F}_{\nu_1,\nu_2}(x)=\text{Beta}_{\nu_1/2,\nu_2/2}((\nu_1/\nu_2)x/((\nu_1/\nu_2)x+1))\]\cite{DAVDUEM21}
and hence the second equality of the theorem.

\subsection{Details and Proofs for Section~\ref{sec:boun_asymp}}
\label{app:variable.selection}

An important ingredient are bounds for the quantile functions of beta and gamma distributions.

\begin{Lemma}
\label{lem:beta.quantiles}
Let $\text{G}$ be the distribution function of $\mathrm{Gamma}(1/2,2) = \chi_1^2$. Then,
\[
	\text{Beta}_{1/2,(n-1)/2}^{-1} \ \begin{cases}
		\ge \ \text{G}^{-1}/(n-1 + \text{G}^{-1})
			& \text{if} \ n \ge 2 , \\
		\le \ (n - 2)^{-1} \text{G}^{-1}
			& \text{if} \ n \ge 3 .
	\end{cases}
\]
Moreover, for $\delta \in (0,1)$,
\[
	\text{G}^{-1}(1 - \delta)
	\ = \ 2\log(1/\delta) - \log \log(1/\delta) - \log \pi + o(1)
	\quad\text{as} \ \delta \to 0 .
\]
\end{Lemma}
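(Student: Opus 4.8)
The three parts are essentially unrelated, and I would handle them in turn. For both inequalities the starting point is the representation from Lemma~\ref{lem:Gamma.Beta}: if $W \sim \chi_1^2$ and $R \sim \chi_{n-1}^2$ are independent, then $U := W/(W+R) \sim \text{Beta}_{1/2,(n-1)/2}$, while $\text{G}$ is the distribution function of $W$; I would also record $\text{G}(x) = 2\Phi(\sqrt{x}) - 1$. For the \emph{lower bound}, apply the increasing map $\text{Beta}_{1/2,(n-1)/2}$ to both sides and put $x = \text{G}^{-1}(p)$; the claim becomes $\text{G}(x) \ge \text{Beta}_{1/2,(n-1)/2}\bigl(x/(n-1+x)\bigr)$ for all $x \ge 0$. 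Since $\{W/(W+R) \le x/(n-1+x)\} = \{W \le xR/(n-1)\}$, the right-hand side equals $\bs{E}\bigl(\text{G}(xR/(n-1))\bigr)$ (expectation over $R$). The density $w \mapsto (2\pi)^{-1/2} w^{-1/2} e^{-w/2}$ of $\text{G}$ is decreasing, so $\text{G}$ is concave; as $\bs{E}(R) = n-1$, Jensen's inequality gives $\bs{E}\bigl(\text{G}(xR/(n-1))\bigr) \le \text{G}(x)$, which is exactly what is needed.

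\textbf{Upper bound.} This is the real work, because it is strictly stronger than anything Jensen delivers. Rephrased, it says that $\text{Beta}_{1/2,(n-1)/2}$ is stochastically dominated by the law of $W/(n-2)$, i.e.\ by $\mathrm{Gamma}(1/2,2/(n-2))$, whose $p$-quantile is precisely $\text{G}^{-1}(p)/(n-2)$. I would establish this stochastic domination by a density-ratio argument. On $(0,1)$ the ratio of the $\text{Beta}_{1/2,(n-1)/2}$-density to the $\mathrm{Gamma}(1/2,2/(n-2))$-density equals $K(1-u)^{(n-3)/2}e^{(n-2)u/2}$ with $K = \sqrt{2\pi/(n-2)}\,/\,B(1/2,(n-1)/2)$, and it is $0$ for $u \ge 1$. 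Two observations then finish it. First, $K \ge 1$, equivalently $\Gamma(n/2)/\Gamma((n-1)/2) \ge \sqrt{(n-2)/2}$, which is a standard lower bound for a ratio of gamma values (Gautschi--Kershaw). Second, $(1-u)^{(n-3)/2}e^{(n-2)u/2}$ equals $1$ at $u=0$, is increasing on $(0,1/(n-2))$ and decreasing on $(1/(n-2),1)$, and tends to $0$ as $u \uparrow 1$ when $n \ge 4$. Hence the density ratio starts at $K \ge 1$, rises, then descends to $0$, so it crosses the level $1$ exactly once, from above; integrating the density difference over $(0,t)$, resp.\ over $(t,\infty)$, then shows that the two distribution functions are ordered, which is the asserted quantile inequality. (For $n=3$ the ratio is merely increasing on $(0,1)$ and stays $\ge K \ge 1$ there, while the $\text{Beta}$-density vanishes on $[1,\infty)$, so the single-crossing conclusion still holds.)

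\textbf{Asymptotics of $\text{G}^{-1}$ and the main obstacle.} Set $x = \text{G}^{-1}(1-\delta)$, so that $\delta = 2\bigl(1-\Phi(\sqrt{x})\bigr)$. The Mills-ratio expansion $1-\Phi(z) = (2\pi)^{-1/2} z^{-1} e^{-z^2/2}\bigl(1+O(z^{-2})\bigr)$ gives $\delta = \sqrt{2/(\pi x)}\,e^{-x/2}\bigl(1+O(1/x)\bigr)$ as $\delta \to 0$. Taking logarithms, $x = 2\log(1/\delta) - \log x + \log 2 - \log \pi + O(1/x)$; since to leading order $x = 2\log(1/\delta)(1+o(1))$, one has $\log x = \log 2 + \log\log(1/\delta) + o(1)$, the two $\log 2$ terms cancel, and $x = 2\log(1/\delta) - \log\log(1/\delta) - \log \pi + o(1)$, as claimed. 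The only genuinely delicate step in the whole lemma is the gamma-ratio estimate $K \ge 1$ underpinning the upper bound; the lower bound is pure Jensen and the asymptotic expansion is the routine Mills-ratio inversion.
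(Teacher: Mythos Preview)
Your lower-bound argument and your Mills-ratio inversion for the asymptotic expansion are correct and coincide with the paper's proof essentially line for line.

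The upper bound is where you and the paper diverge. Your density-ratio route works: the single-crossing analysis is sound, and the inequality $K \ge 1$ is indeed the Gautschi--Kershaw bound $\Gamma(x+1/2)/\Gamma(x) \ge \sqrt{x-1/2}$ with $x = (n-1)/2$. However, your remark that the upper bound ``is strictly stronger than anything Jensen delivers'' is not right, and the paper exploits exactly this. Recall that $U = Z_1^2/\|\Z\|^2$ and $T := \|\Z\|^2$ are \emph{independent} (this is the content of Lemma~\ref{lem:Gamma.Beta}), and $TU = Z_1^2$ has distribution function $\text{G}$. Hence, conditioning on $T$,
\[
	\text{G}(y) \ = \ \bs{P}(TU \le y) \ = \ \bs{E}\bigl(\text{B}(T^{-1}y)\bigr)
	\ \le \ \text{B}\bigl(\bs{E}(T^{-1})\,y\bigr)
	\ = \ \text{B}\bigl((n-2)^{-1}y\bigr),
\]
because $\text{B} = \text{Beta}_{1/2,(n-1)/2}$ is concave on $(0,\infty)$ (its density is decreasing) and $\bs{E}(T^{-1}) = (n-2)^{-1}$ for $n \ge 3$. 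Inverting gives $\text{B}^{-1} \le (n-2)^{-1}\text{G}^{-1}$ directly. So the paper's proof is a perfectly symmetric pair of Jensen inequalities---one using concavity of $\text{G}$ and $\bs{E}(S^2)=n-1$, the other using concavity of $\text{B}$ and $\bs{E}(T^{-1})=(n-2)^{-1}$---with no gamma-function estimate and no case distinction in $n$. Your argument buys nothing extra here and is strictly more laborious; the key observation you missed is that the independence of $U$ and $T$ lets you run Jensen in the reverse direction.
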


For the second part see for example  Chapter~2 of \cite{DEHFER06}. It has various implications for the maximum of squared standard Gaussian random variables:

\begin{Lemma}
\label{lem:Gumbel}
Let $\Z \in \R^q$ be a random vector with components $Z_i \sim N(0,1)$. Then
\[
	\bs{P} \Bigl( \max_{1 \le i \le q} Z_i^2 \le 2 \log q \Bigr)
	\ \to \ 1
\]
as $q \to \infty$. If $\Z \sim N_q(\bs{0},\bs{I})$, then
\[
	\max_{1 \le i \le q} Z_i^2 \ = \ 2 \log q - \log\log q - \log\pi + 2 X_q
\]
with a random variable $X_q$ such that $\lim_{q\to\infty} \bs{P}(X_q \le x) = \exp(- e^{-x})$ for any $x \in \R$.
\end{Lemma}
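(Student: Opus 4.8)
The statement splits into two parts that require different arguments: the first concerns only the marginal tails of $\Z$ and is a union-bound estimate, while the second uses the independence assumption $\Z\sim N_q(\bs{0},\bs{I})$ and is an instance of classical extreme-value asymptotics for i.i.d.\ $\chi_1^2$ variables.

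\emph{First claim.} Let $\text{G}$ denote the distribution function of $\chi_1^2$ and set $\bar{\text{G}}:=1-\text{G}$. Since each $Z_i^2\sim\chi_1^2$, a union bound gives $\bs{P}(\max_{1\le i\le q}Z_i^2>2\log q)\le q\,\bar{\text{G}}(2\log q)$. The Mills-ratio bound $\bs{P}(|Z_1|>t)\le\sqrt{2/\pi}\,t^{-1}e^{-t^2/2}$ for $t>0$ then yields $\bar{\text{G}}(2\log q)=\bs{P}(|Z_1|>\sqrt{2\log q})\le(\pi\log q)^{-1/2}q^{-1}$, whence $q\,\bar{\text{G}}(2\log q)\le(\pi\log q)^{-1/2}\to0$. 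No independence is used, so this holds for any $\Z$ with standard-normal marginals. (Alternatively one can set $\delta_q:=\bar{\text{G}}(2\log q)$ and read off from the quantile expansion in Lemma~\ref{lem:beta.quantiles} that $q\delta_q=\Theta((\log q)^{-1/2})$.)

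\emph{Second claim.} Here I would derive the normalizing constants explicitly and then invoke independence. From $\bar{\text{G}}(t)=2(1-\Phi(\sqrt t))$ and the full Mills-ratio asymptotic one gets $\bar{\text{G}}(t)=\sqrt{2/\pi}\,t^{-1/2}e^{-t/2}(1+o(1))$ as $t\to\infty$. Put $b_q:=2\log q-\log\log q-\log\pi$; by the quantile expansion in Lemma~\ref{lem:beta.quantiles} this is $\text{G}^{-1}(1-1/q)+o(1)$. Substituting $t=b_q+2x$ and using $e^{-b_q/2}=q^{-1}(\pi\log q)^{1/2}$ together with $(b_q+2x)^{-1/2}=(2\log q)^{-1/2}(1+o(1))$, the constants $\sqrt{2/\pi}$ and $\sqrt{\pi/2}$ cancel and one obtains $q\,\bar{\text{G}}(b_q+2x)\to e^{-x}$ for every fixed $x\in\R$. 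Independence then gives $\bs{P}(\max_{1\le i\le q}Z_i^2\le b_q+2x)=(1-\bar{\text{G}}(b_q+2x))^q=\bigl(1-(e^{-x}+o(1))/q\bigr)^q\to\exp(-e^{-x})$. Defining $X_q:=(\max_{1\le i\le q}Z_i^2-b_q)/2$ gives the asserted representation $\max_{1\le i\le q}Z_i^2=2\log q-\log\log q-\log\pi+2X_q$ together with $\bs{P}(X_q\le x)\to\exp(-e^{-x})$; this is of course just the statement that the $\chi_1^2$ distribution lies in the Gumbel max-domain of attraction, so one could also simply cite e.g.\ Chapter~2 of \cite{DEHFER06} for the limit and only compute the centering constant $b_q$.

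\emph{Main obstacle.} There is no real difficulty. The only point needing care is the bookkeeping of the $o(1)$ terms in the second part: the tail expansion of $\bar{\text{G}}$ holds only as its argument diverges, so one first checks $b_q+2x\to\infty$ and then verifies that after multiplication by $q$ the surviving error is still $o(1)$ — which is immediate, since the error is the single multiplicative factor $(1+o(1))$ from the Mills ratio at a diverging point. Everything else is the elementary limit $(1-c/q)^q\to e^{-c}$ and arithmetic with $\sqrt{2/\pi}$, streamlined if desired by quoting Lemma~\ref{lem:beta.quantiles} for the form of $b_q$.
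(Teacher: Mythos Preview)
Your proof is correct. The paper does not actually give a proof of this lemma: it simply states the result as an implication of the quantile expansion in Lemma~\ref{lem:beta.quantiles}, with the extreme-value limit referred to Chapter~2 of \cite{DEHFER06}. Your argument fills in precisely those details---the union bound for the first part and the explicit verification of $q\,\bar{\text{G}}(b_q+2x)\to e^{-x}$ for the second---and you even note yourself that the second part can alternatively be obtained by citing that reference, which is exactly what the paper does.
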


Lemma~\ref{lem:beta.quantiles} also leads to a particular approximation of beta quantiles:

\begin{Lemma}
\label{lem:step-wise.quantiles}
For integers $n, q \ge 2$ and fixed $\alpha \in (0,1)$,
\[
	n \text{Beta}_{1/2,(n-1)/2}^{-1} \bigl( (1 - \alpha)^{1/q} \bigr)
	\ = \ 2 \log q - \log\log q - \log \pi - 2 \log(- \log(1 - \alpha)) 
		+ o(1)
\]
as $q \to \infty$ and $n / \log(q)^2 \to \infty$.
\end{Lemma}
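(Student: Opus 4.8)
The plan is to reduce the statement to the asymptotic expansion of the $\chi_1^2$-quantile in Lemma~\ref{lem:beta.quantiles} together with the two-sided comparison between the Beta and Gamma quantile functions given in the same lemma. Write $\text{G}$ for the distribution function of $\chi_1^2$ as there, and abbreviate $x_{n,q} := \text{G}^{-1}\bigl((1-\alpha)^{1/q}\bigr)$.

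First I would set $\delta := 1 - (1-\alpha)^{1/q} \in (0,1)$ and observe that $\delta \to 0$ as $q \to \infty$. Expanding $(1-\alpha)^{1/q} = \exp\bigl(q^{-1}\log(1-\alpha)\bigr)$ to first order gives $\delta = q^{-1}\bigl(-\log(1-\alpha)\bigr)\bigl(1 + O(q^{-1})\bigr)$, hence $\log(1/\delta) = \log q - \log\bigl(-\log(1-\alpha)\bigr) + o(1)$ and, composing once more, $\log\log(1/\delta) = \log\log q + o(1)$. Substituting these into the expansion $\text{G}^{-1}(1-\delta) = 2\log(1/\delta) - \log\log(1/\delta) - \log\pi + o(1)$ from Lemma~\ref{lem:beta.quantiles} yields exactly the claimed right-hand side; in particular this already shows $x_{n,q} = 2\log q - \log\log q - \log\pi - 2\log\bigl(-\log(1-\alpha)\bigr) + o(1)$, and I would record the crude consequence $x_{n,q} = O(\log q)$ for use in the next step.

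Next I would show $n\,\text{Beta}_{1/2,(n-1)/2}^{-1}\bigl((1-\alpha)^{1/q}\bigr) = x_{n,q} + o(1)$ by applying the two inequalities of Lemma~\ref{lem:beta.quantiles} at the argument $p = (1-\alpha)^{1/q}$; this is legitimate once $q$ is large enough that $n \ge 3$, which holds eventually because $n/\log(q)^2 \to \infty$ forces $n \to \infty$. The upper bound gives $n\,\text{Beta}^{-1} \le \tfrac{n}{n-2} x_{n,q} = x_{n,q} + \tfrac{2 x_{n,q}}{n-2}$, and since $x_{n,q} = O(\log q)$ while $n \gg (\log q)^2$ we get $x_{n,q}/n = O(1/\log q) = o(1)$. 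The lower bound gives $n\,\text{Beta}^{-1} \ge \tfrac{n x_{n,q}}{n-1+x_{n,q}} = x_{n,q}\bigl(1 - \tfrac{x_{n,q}-1}{n} + O\bigl((x_{n,q}/n)^2\bigr)\bigr) = x_{n,q} - O\bigl((\log q)^2/n\bigr) = x_{n,q} - o(1)$, again because $n \gg (\log q)^2$. Combining the two bounds with the expansion of $x_{n,q}$ from the previous paragraph completes the proof.

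The part I would be most careful about is the bookkeeping of error terms in the \emph{joint} limit $q \to \infty$, $n/\log(q)^2 \to \infty$: one must check simultaneously that $\delta \to 0$ (so the $\chi_1^2$-quantile expansion is applicable), that replacing $\delta$ by $q^{-1}\bigl(-\log(1-\alpha)\bigr)$ introduces only an $o(1)$ error after two successive logarithms, and — the genuinely coupling estimate — that $x_{n,q}^2/n \to 0$, which is exactly what makes the discrepancy between the Beta and Gamma quantiles negligible after multiplication by $n$ and is precisely where the hypothesis $n/\log(q)^2 \to \infty$ enters. Everything else is direct substitution into Lemma~\ref{lem:beta.quantiles}.
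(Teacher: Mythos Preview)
Your proof is correct and follows essentially the same approach as the paper: expand $\delta = 1 - (1-\alpha)^{1/q}$ to identify $\log(1/\delta)$ and $\log\log(1/\delta)$, apply the $\chi_1^2$-quantile expansion from Lemma~\ref{lem:beta.quantiles} to obtain $x_{n,q}$, and then use the two-sided Beta--Gamma quantile comparison from the same lemma together with $n/\log(q)^2 \to \infty$ to show $n\,\text{Beta}_{1/2,(n-1)/2}^{-1} = x_{n,q} + o(1)$. The only cosmetic difference is that the paper packages the sandwich as $\text{Beta}^{-1} = (n + O(\log q))^{-1} \text{G}^{-1}$ in one line, whereas you treat the upper and lower bounds separately; the content is the same.
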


\begin{proof}[\bf Proof of Lemma~\ref{lem:beta.quantiles}.]
Recall that $\text{B} := \text{Beta}_{1/2,(n-1)/2}$ is the distribution function of $Z_1^2/(Z_1^2 + S^2)$ with $S^2 = \sum_{i=2}^n Z_i^2$ and $\Z \sim N_n(\bs{0},\bs{I})$. Then Jensen's inequality implies that for $0 < x < 1$,
\[
	\text{B}(x)
	\ = \ \bs{E} \left(\bs{P} \Bigl( Z_1^2 \le \frac{S^2 x}{1 - x} \,\Big|\, S^2 \Bigr)\right)
	\ = \ \bs{E} \left(\text{G} \Bigl( \frac{S^2 x}{1 - x} \Bigr)\right)
	\ \le \ \text{G} \Bigl( \frac{(n-1) x}{1 - x} \Bigr) ,
\]
because $\bs{E}(S^2) = n-1$ and $\text{G}$ is concave. Consequently, for $0 < u < 1$, $\text{B}^{-1}(u)$ is not smaller than the solution $x$ of $(n-1)x/(1 - x) = \text{G}^{-1}(u)$, which is $\text{G}^{-1}(u) / (n - 1 + \text{G}^{-1}(u))$.

On the other hand, if $n \ge 3$, then it it follows from independence of $X := Z_1^2/\|\Z\|^2$ and $T := \|\Z\|^2$ with $\bs{E}(T^{-1}) = (n - 2)^{-1}$ that
\[
	\text{G}(y) \ = \ \bs{P}(TX \le y)
	\ = \ \bs{E} \left(\text{B}(T^{-1} y)\right)
	\ \le \ \text{B}((n-2)^{-1} y)
\]
by Jensen's inequality and concavity of $\text{B}$. Consequently, $\text{B} \ge \text{G}((n-2) \cdot)$, and this implies that $\text{B}^{-1} \ge (n - 2)^{-1} \text{G}^{-1}$.

For the reader's convenience, a proof of the second part is provided as well. Since $\text{G}'(x) = (2\pi x)^{-1/2} e^{-x/2}$, partial integration and elementary bounds yield the inequalities
\[
	2^{1/2} (\pi x)^{-1/2} e^{-x/2} (1 - 2x^{-1})
	\ \le \ 1 - \text{G}(x)
	\ \le \ 2^{1/2} (\pi x)^{-1/2} e^{-x/2} .
\]
If we fix an arbitrary real number $z$ and set $x := 2 \log(1/\delta) - \log\log(1/\delta) - \log\pi + z$, then $x = 2 \log(1/\delta) (1 + o(1)) \to \infty$ and
\[
	2 \log(1 - \text{G}(x)) \ = \ 2 \log(\delta) - z + o(1)
\]
as $\delta \downarrow 0$. This implies the asserted expansion for $\text{G}^{-1}(1 - \delta)$ as $\delta \downarrow 0$.
\end{proof}

\begin{proof}[\bf Proof of Lemma~\ref{lem:step-wise.quantiles}.]
Note first that $(1 - \alpha)^{1/q} = \exp(\log(1 - \alpha)/q)$ may be written as $1 - \delta$ with $\delta := q^{-1} \tilde{\alpha} (1 + O(q^{-1}))$ and $\tilde{\alpha} := - \log(1 - \alpha)$. Since $\log(1/\delta) = \log q - \log \tilde{\alpha} + o(1)$ and $\log \log(1/\delta) = \log(\log q + O(1)) = \log \log q + o(1)$, it follows from the second part of Lemma~\ref{lem:beta.quantiles} that
\[
	\text{G}^{-1} \bigl( (1 - \alpha)^{1/q} \bigr)
	\ = \ 2 \log q - \log \log q - \log \pi - 2 \log \tilde{\alpha} + o(1)
	\ = \ O(\log q)
\]
as $q \to \infty$. Then the first part of that lemma implies that
\begin{align*}
	\text{Beta}_{1/2,(n-1)/2}^{-1} \bigl( (1 - \alpha)^{1/q} \bigr) \
	&= \ (n + O(\log q))^{-1} \text{G}^{-1} \bigl( (1 - \alpha)^{1/q} \bigr) \\
	&= \ n^{-1} \bigl( 1 + O(n^{-1} \log q) \bigr)
		\text{G}^{-1} \bigl( (1 - \alpha)^{1/q} \bigr) \\
	&= \ n^{-1} \bigl( 2 \log q - \log \log q - \log \pi
		- 2 \log \tilde{\alpha} + o(1) \bigr)
\end{align*}
as $q \to \infty$ and $n/\log(q)^2 \to 0$.
\end{proof}

\begin{proof}[\bf Proof of Theorem~\ref{thm:consistency.0}]
Note first that $(\x_\nu^\top\y)^2/\|\y\|^2 = (\x_\nu^\top\Z)^2/\|\Z\|^2$ has distribution function $\text{B} = \text{Beta}_{1/2,(n-1)/2}$. Hence, with $x_{n,q} := \text{B}^{-1} \bigl( (1 - \alpha)^{1/q} \bigr)$,
\[
	\bs{P} \Bigl( \max_{\x_\nu} \frac{(\x_\nu^\top\y)^2}{\|\y\|^2}
		\ge x_{n,q} \Bigr)
	\ \le \ q \bigl( 1 - (1 - \alpha)^{1/q} \bigr)
	\ \le \ - \log(1 - \alpha) ,
\]
because $(1 - \alpha)^{1/q} = \exp \bigl( q^{-1} \log(1 - \alpha) \bigr) \ge 1 + q^{-1} \log(1 - \alpha)$. Note also that $\|\Z\|^2$ has expectation $n$ and variance $2n$, whence for arbitrary $c > 0$,
\[
	\bs{P}(\|\Z\|^2 \le n - cn^{1/2}), \bs{P}(\|\Z\|^2 \ge n + cn^{1/2})
	\ \le \ \frac{2}{2 + c^2}
\]
by the Tshebyshev-Cantelli inequality. Consequently,
\[
	\bs{P} \Bigl( \max_{\x_\nu} \frac{(\x_\nu^\top\y)^2}{\|\y\|^2}
		\ge x_{n,q} \Bigr)
	\ \le \ \bs{P} \Bigl( \max_{\x_\nu} (\x_\nu^\top\Z)^2
			\ge (1 - cn^{-1/2}) n x_{n,q} \Bigr)
		+ \frac{2}{2 + c^2}
\]
and
\[
	\bs{P} \Bigl( \max_{\x_\nu} \frac{(\x_\nu^\top\y)^2}{\|\y\|^2}
		\ge x_{n,q} \Bigr)
	\ \ge \ \bs{P} \Bigl( \max_{\x_\nu} (\x_\nu^\top\Z)^2
			\ge (1 + c^{-1/2}) n x_{n,q} \Bigr)
		- \frac{2}{2 + c^2} .
\]
But it follow from the Gaussian inequality (cf.\ \cite{SID67} or \cite{ROY14}) that for any number $x$,
\[
	\bs{P} \bigl( \max_{\x_\nu} (\x_\nu^\top\Z)^2 \ge x \bigr)
	\ \le \ \bs{P} \bigl( \max_{\nu} Z_\nu^2 \ge x \bigr)
\]
with independent random variables $Z_\nu \sim N(0,1),\nu=1,\ldots,q$ with equality in case of orthonormal regressors $\x_\nu$. Now the claims follow from the fact that for any fixed $c > 0$ and $\tilde{\alpha} := - \log(1 - \alpha)$,
\begin{align*}
	(1 \pm cn^{-1/2}) n x_{n,q} \
	&= \ (1 \pm cn^{-1/2})
		\bigl( 2 \log q - \log\log q - \log\pi - 2\log \tilde{\alpha} + o(1) \bigr) \\
	&= \ 2 \log q - \log\log q - \log\pi - 2\log \tilde{\alpha} + o(1)
\end{align*}
by Lemma~\ref{lem:step-wise.quantiles}, and
\[
	\bs{P} \Bigl( \max_{\nu} Z_\nu^2
		\ge 2 \log q - \log\log q - \log\pi - 2\log\tilde{\alpha} + o(1) \Bigr)
	\ \to \ 1 - \exp(- \exp(\log \tilde{\alpha})) \ = \ \alpha
\]
by Lemma~\ref{lem:Gumbel}.
\end{proof}

\begin{proof}[\bf Proof of Theorems~\ref{thm:consistency.general} and \ref{thm:consistency.orthogonal}]
Note first that in case of orthonormal regressors, $q \le n$, and Condition~(A.1') implies Condition~(A.1). Without loss of generality we assume that $\sigma = 1$.

At first we verify that $\hat{{\mathcal S}} \supset {\mathcal S}_*$ with asymptotic probability one. Having started step-wise selection with ${\mathcal S} = \emptyset$, suppose we have chosen a set ${\mathcal S} \subsetneq {\mathcal S}_*$ of $k$ covariates. The question is whether an additional regressor $\x_\nu$ with $\x_\nu \in {\mathcal S}_* \setminus {\mathcal S}$ will be added to ${\mathcal S}$, regardless of the choice of ${\mathcal S}$. This is certainly the case if
\begin{equation}
\label{ineq:step-wise.i}
	\min_{{\mathcal S} \subsetneq {\mathcal S}_*}
		\Bigl( \max_{\x_\nu \in {\mathcal S}_*\setminus{\mathcal S}} |\x_{{\mathcal S},\nu}^\top\y|
			- \max_{\x_s\notin{\mathcal S}_*} |\x_{{\mathcal S},s}^\top\y| \Bigr)
	\ > \ 0
\end{equation}
and
\begin{equation}
\label{ineq:step-wise.ii}
	\min_{{\mathcal S} \subsetneq {\mathcal S}_*} \Bigl( \max_{\x_\nu \in {\mathcal S}_*\setminus{\mathcal S}} \,
		\frac{|\x_{{\mathcal S},\nu}^\top\y|}{\|Q_{\mathcal S}^{}\y\|}
		- \kappa_{n-k,q-k}^{} \Bigr)
	\ > \ 0
\end{equation}
with asymptotic probability one, where $\kappa_{n',q'} := \sqrt{ B_{1/2,(n'-1)/2}^{-1} \bigl( (1 - \alpha)^{1/q'}\bigr) }$. Inequality~\eqref{ineq:step-wise.i} can be replaced by the stronger but simpler inequality
\begin{equation}
\label{ineq:step-wise.i'}
	\min_{\x_\nu\in{\mathcal S}_*,{\mathcal S} \subset {\mathcal S}_*\setminus\{\x_\nu\},\x_s\notin{\mathcal S}_*}
		\bigl( |\x_{{\mathcal S},\nu}^\top\y|
			- |\x_{{\mathcal S},s}^\top\y| \bigr)
	\ > \ 0 .
\end{equation}
Moreover, according to Lemma~\ref{lem:step-wise.quantiles},
\[
	\max_{0 \le k\le k_*} \, \kappa_{n-k,q-k}
	\ = \ \sqrt{ \frac{(2 + o(1)) \log q}{n} } ,
\]
and $\|Q_{\mathcal S}\y\| \le \|\y\|$. But $\|\y\|^2$ has a non-central chi-squared distribution with $n$ degrees of freedom and non-centrality parameter $\|\bmu\|^2$. In particular, it has expectation $n + \|\bmu\|^2$ and variance $2n + 4\|\bmu\|^2$, and this implies that
\begin{equation}
\label{eq:chi}
	\|\y\| \ = \ \sqrt{n + \|\bmu\|^2} + O_p(1)
	\ = \ \sqrt{n + \|\bmu\|^2} (1 + o_p(1)) .
\end{equation}
Hence we may replace \eqref{ineq:step-wise.ii} with
\begin{equation}
\label{ineq:step-wise.ii'}
	\min_{\x_\nu \in {\mathcal S}_*, {\mathcal S} \subset {\mathcal S}_*\setminus\{\x_\nu\}}
		\frac{|\x_{{\mathcal S},\nu}^\top\y|}{\sqrt{n + \|\bmu\|^2}}
	\ > \ \sqrt{ \frac{\tau' \log q}{n} }
\end{equation}
for some $\tau' > 2$.

Let us verify \eqref{ineq:step-wise.i} and \eqref{ineq:step-wise.ii} for orthonormal regressors $\x_\nu$ and $\bmu = \sum_{\x_\nu\in{\mathcal S}_*} \beta_\nu\x_\nu$. Here $\x_{{\mathcal S},\nu} = \x_\nu$ and ${\mathcal S} \subset {\mathcal S}_*\setminus\{\x_\nu\}$, whence the left hand side of \eqref{ineq:step-wise.i'} equals
\begin{align*}
	\min_{\x_\nu\in{\mathcal S}_*} |\beta_\nu|
		- \max_{\x_\nu\in{\mathcal S}_*} |\x_\nu^\top\Z|
		- \max_{\x_s\notin{\mathcal S}_*} |\x_s^\top\Z| \
	&\ge \ \min_{\x_\nu\in{\mathcal S}_*} |\beta_\nu|
		- \sqrt{2 \log k_*} - \sqrt{2 \log q} - O_p(1) \\
	&\ge \ \sqrt{\tau \log q} - \sqrt{2 \log q} - O_p(1)
		\ \to_p \ \infty ,
\end{align*}
where the second last inequality follows from Lemma~\ref{lem:Gumbel}, and the last inequality is a consequence of Condition~(A.2'). This proves \eqref{ineq:step-wise.i'}. Similarly one can show that the left hand side of \eqref{ineq:step-wise.ii'} is equal to
\begin{align*}
	\min_{\x_\nu\in{\mathcal S}_*} \frac{|\x_\nu^\top\y|}{\sqrt{n + \|\bmu\|^2}} \
	&\ge \ \min_{\x_\nu\in{\mathcal S}_*} \frac{|\beta_\nu| - |\x_\nu^\top\Z|}
		{\sqrt{n + \|\bmu\|^2}} \\
	&\ge \ \min_{\x_\nu\in{\mathcal S}_*} \frac{|\beta_\nu| - \sqrt{2 \log k_*} - O_p(1)}
		{\sqrt{n + \|\bmu\|^2}} \\
	&\ge \ \frac{\sqrt{\tau \log q} - O_p(1)}{\sqrt{n}}
		\ = \ \sqrt{ \frac{(\tau + o_p(1)) \log q}{n} } ,
\end{align*}
and the latter quantity is greater than $\sqrt{\tau' \log(q)/n}$ with asymptotic probability one, provided that $2 < \tau' < \tau$.

Now we verify \eqref{ineq:step-wise.i'} and \eqref{ineq:step-wise.ii'} in the general case. On the one hand, since all vectors $\x_{{\mathcal S},\nu}$ with $\x_\nu \in {\mathcal S}_*$ and ${\mathcal S}\subset{\mathcal S}_*\setminus\{\x_\nu\}$ belong to the unit ball of $\V_{{\mathcal S}_*}$,
\begin{align*}
	\min_{\x_\nu \in {\mathcal S}_*, {\mathcal S} \subset {\mathcal S}_*\setminus\{\x_\nu\}} |\x_{{\mathcal S},\nu}^\top\y| \
	&\ge \ \min_{\x_\nu \in {\mathcal S}_*, {\mathcal S} \subset {\mathcal S}_*\setminus\{\x_\nu\}}
		|\x_{{\mathcal S},\nu}^\top\bmu| - \|\hat{Z}_{{\mathcal S}_*}\| \\
	&\ge \ \min_{\x_\nu \in {\mathcal S}_*, {\mathcal S} \subset {\mathcal S}_*\setminus\{\x_\nu\}}
		|\x_{{\mathcal S},\nu}^\top\bmu| - \sqrt{k_*} - O_p(1) ,
\end{align*}
because $\|\hat{\Z}_{{\mathcal S}_*}\|^2$ has a chi-squared distribution with $m_*$ degrees of freedom, see also the arguments for \eqref{eq:chi}. On the other hand, for any ${\mathcal S} \subset {\mathcal S}_*$ and $\x_s \notin{\mathcal S}_*$, it follows from $\V_{{\mathcal S}}^\perp \supset \V_{{\mathcal S}_*}^\perp$ that the vector $Q_{\mathcal S}\x_s$ is the sum of $Q_{{\mathcal S}_*}\x_s \in \V_{{\mathcal S}_*}^\perp$ and $(Q_{{\mathcal S}} - Q_{{\mathcal S}_*})\x_s \in (\V_{{\mathcal S}_*}^\perp)^\perp = \V_{{\mathcal S}_*}$. Consequently,
\[
	\x_{{\mathcal S},s} \ = \ \lambda_{{\mathcal S},s} \bs{v}_s
		+ \bar{\lambda}_{{\mathcal S},s} \bar{\bs{v}}_{{\mathcal S},s}
\]
with
\begin{align*}
	\lambda_{{\mathcal S},s} \
	&:= \ \|Q_{{\mathcal S}_*}\x_s\| \big/
		\sqrt{ \|Q_{{\mathcal S}_*}\x_s\|^2 + \|(Q_{{\mathcal S}} - Q_{{\mathcal S}_*})\x_s\|^2 }
		\ \in \ [0,1] , \\
	\bs{v}_s \
	&:= \ \|Q_{{\mathcal S}_*}\x_s\|^{-1} Q_{{\mathcal S}_*}\x_s
		\ \in \ \V_{{\mathcal S}_*}^\perp , \\
	\bar{\lambda}_{{\mathcal S},s} \
	&:= \ \sqrt{1 - \lambda_{{\mathcal S},s}^2}
		\ \in \ [0,1] , \\
	\bar{\bs{v}}_{{\mathcal S},s} \
	&:= \ \|(Q_{{\mathcal S}} - Q_{{\mathcal S}_*})\x_s\|^{-1} (Q_{{\mathcal S}} - Q_{{\mathcal S}_*})\x_s
		\ \in \ \V_{{\mathcal S}_*} .
\end{align*}
This implies that
\begin{align*}
	\max_{{\mathcal S} \subset {\mathcal S}_*, \x_s \notin{\mathcal S}_*} |\x_{{\mathcal S},s}^\top\y| \
	&\le \ \max_{{\mathcal S} \subset {\mathcal S}_*, \x_s \notin {\mathcal S}_*}
		|\x_{{\mathcal S},s}^\top\bmu|
		+ \|\hat{\Z}_{{\mathcal S}_*}\|
		+ \max_{s \in \NN\setminus{\mathcal S}_*} |\bs{v}_s^\top\Z| \\
	&\le \ \max_{{\mathcal S} \subset {\mathcal S}_*, \x_s \notin{\mathcal S}_*}
		|\x_{{\mathcal S},s}^\top\bmu|
		+ \sqrt{k_*} + \sqrt{2\log q} + O_p(1) .
\end{align*}
These inequalities and assumption (A.2) imply that the left hand side of \eqref{ineq:step-wise.i'} is not smaller than
\begin{align*}
	\min_{\x_\nu\in{\mathcal S}_*,{\mathcal S}\subset{\mathcal S}_*\setminus\{\x_\nu\},\x_s\notin{\mathcal S}_*}
		& \bigl( |\x_{{\mathcal S},\nu}^\top\bmu| - |\x_{{\mathcal S},s}^\top\bmu| \bigr)
		 - 2 \sqrt{k_*} - \sqrt{2 \log q} - O_p(1) \\
	&\ge \ \sqrt{\tau \log q} - \sqrt{2\log q} - O_p(1)
		\ \to_p \ \infty .
\end{align*}
Hence \eqref{ineq:step-wise.i'} is satisfied with asymptotic probability one. Moreover, a second application of (A.2) shows that the left hand side of \eqref{ineq:step-wise.ii'} is not smaller than
\begin{align*}
	\min_{\x_\nu \in {\mathcal S}_*, {\mathcal S} \subset {\mathcal S}_*\setminus\{\x_\nu\}}
		\frac{|\x_{{\mathcal S},\nu}^\top\bmu| - \|\hat{\Z}_{{\mathcal S}_*}\|}
			{\sqrt{n + \|\bmu\|^2}} \
	&\ge \ \min_{\x_\nu \in {\mathcal S}_*, {\mathcal S} \subset {\mathcal S}_*\setminus\{\x_\nu\}}
		\frac{\sqrt{\tau \log q} + \sqrt{k_*} - O_p(1)}{\sqrt{n}} \\
	&\ge \ \sqrt{ \frac{(\tau + o_p(1)) \log q}{n} } ,
\end{align*}
and the latter quantity is greater than $\sqrt{\tau' \log(q)/n}$ with asymptotic probability one, provided that $2 < \tau' < \tau$.

So far we have shown that with asymptotic probability one, the step-wise selection will lead to the candidate ${\mathcal S} = {\mathcal S}_*$ for $\hat{{\mathcal S}}$. But at that stage, $\x_{{\mathcal S},\nu}^\top\y = \x_{{\mathcal S}_*,\nu}^\top\Z = \x_{{\mathcal S}_*,\nu}^\top Q_{{\mathcal S}_*}\Z$ for all $\x_\nu \notin {\mathcal S}_*$, so
\[
	\bs{P}({\mathcal S}_* \subsetneq \hat{{\mathcal S}})
	\ \le \ o(1) + \bs{P} \Bigl( \max_{\x_\nu \notin {\mathcal S}_*}
		\frac{(\x_{{\mathcal S}_*,\nu}^\top Q_{{\mathcal S}_*}\Z)^2}{\|Q_{{\mathcal S}_*}\Z\|^2}
			\ge \kappa_{n-k_*,q-k_*}^2 \Bigr)
	\ \le \ \alpha + o(1)
\]
by a simple adaptation of Theorem~\ref{thm:consistency.0}.
\end{proof}
%
%
%
%
%

\bibliographystyle{apalike}
\bibliography{literature}
\end{document}